\documentclass[12pt]{article}

\usepackage[float]{preamble}
\usepackage{flafter} 
\usepackage{natbib}
\newcommand{\rEbase}{1.07}
\newcommand{\rElo}{0.45}
\newcommand{\rEhi}{2.52}

\newcommand{\rClo}{0.60}
\newcommand{\rChi}{8.22}

\newcommand{\rstarlo}{0.44}
\newcommand{\rstarhi}{2.21}
\newcommand{\rEcapcut}{6\%}
\newcommand{\pEbindspct}{80\%}
\newcommand{\rCrEratio}{3.2}
\newcommand{\rEdoubling}{2.1}
\newcommand{\rCalphadoubling}{2.5}
\newcommand{\rCdoubling}{11}

\newcommand{\rCbase}{3.45}
\newcommand{\gnubase}{0.92}
\newcommand{\gRbase}{0.86}
\newcommand{\ghbase}{1.93}

\newcommand{\sigmabase}{0.61}
\newcommand{\sigmahigh}{0.83}

\newcommand{\pbothpct}{49\%}
\newcommand{\pCpct}{85\%}
\newcommand{\pEpct}{56\%}

\newcommand{\ponlyEpct}{7\%}
\newcommand{\ponlyCpct}{37\%}
\newcommand{\pneitherpct}{7\%}
\newcommand{\rstarmedian}{0.99}
\newcommand{\gnuscalefree}{0.09}
\newcommand{\gnuimagenet}{0.92}
\newcommand{\gnupretraining}{0.99}
\newcommand{\gnuinference}{1.17}
\newcommand{\gnusd}{0.40}
\newcommand{\gRlow}{0.64}
\newcommand{\gRlowfactor}{1.90}

\newcommand{\sswbase}{0.46}
\newcommand{\sswlow}{0.31}
\newcommand{\sswhigh}{0.71}
\newcommand{\kdoublingpct}{60\%}
\newcommand{\kbase}{1.31}
\newcommand{\kbaseloglog}{1.29}
\newcommand{\kthin}{0.52}
\newcommand{\krct}{1.87}
\newcommand{\kgdpval}{1.34}
\newcommand{\sieshareGnu}{45\%}
\newcommand{\sieshareOthers}{24\%}

\newcommand{\sieGnuLowpct}{0\%}
\newcommand{\sieGnuHighpct}{85\%}
\newcommand{\sat}{\alpha^{*}}                    
\newcommand{\hbound}[1]{\overline{R}\!\left(#1\right)} 
\newcommand{\dd}{\,\mathrm{d}}

\usepackage{xcolor}

\newif\ifeditsfinal
\editsfinalfalse
\newcounter{edit}
\ifeditsfinal
  
  \newcommand{\deleted}[1]{}
  
  \newcommand{\moved}[1]{}

  \newenvironment{edit}{}{}
\else
  \newcommand{\editmark}{\stepcounter{edit}\textsuperscript{\textcolor{orange}{\sffamily\bfseries\tiny E\theedit}}}
  
  \newcommand{\deleted}[1]{\editmark\textcolor{red}{\sffamily\footnotesize[del: #1]}}
  
  \newcommand{\moved}[1]{\editmark\textcolor{violet}{\sffamily\footnotesize[#1]}}

\fi

\newtheorem{proposition}{Proposition}
\newtheorem{lemma}{Lemma}
\newtheorem{corollary}{Corollary}
\theoremstyle{definition}

\newtheorem{definition}{Definition}
\theoremstyle{remark}

\title{Efficiency vs Capability in the Intelligence Explosion}
\author{Karthik Tadepalli\footnote{GovAI, \href{mailto:karthik.tadepalli@governance.ai}{karthik.tadepalli@governance.ai}. This paper benefited greatly from the feedback of Sam Manning, Cheryl Wu, Alan Chan, Markus Anderljung, Tom Davidson, Phil Trammell, Tom Cunningham, Joseph Levine, Kris Gulati, and audiences at OpenAI and Forethought. Any remaining errors are my own. Replication code for all figures and tables is available \href{https://github.com/karthiktadepalli1/efficiency-capability-sie}{here}.}}
\date{\today}

\begin{document}
\maketitle

\onehalfspacing

\begin{abstract}
\noindent AI progress can feed back into AI research in two ways. Capability improvements allow AIs to perform a larger share of AI research tasks, while efficiency improvements raise AI research output per unit of compute on tasks already automated. In a task automation model of AI research, I show that a \textit{software-driven intelligence explosion}---a self-sustaining acceleration in AI progress with fixed compute and human labor---requires both margins to improve sufficiently fast, and neither margin can substitute for the other. To see whether these conditions hold, I calibrate both margins with data on recent AI progress. I estimate that capability progress is likely fast enough to support an SIE, while efficiency progress is on the knife-edge. In Monte Carlo simulations, an SIE occurs in \pbothpct{} of draws, indicating that an SIE is plausible but uncertain based on historical trends in AI progress. The single largest source of uncertainty is the rate of efficiency progress, whose uncertainty is responsible for \sieshareGnu{} of the uncertainty about whether an SIE occurs, more than all other parameters combined. Thus, better data on efficiency progress in frontier models is essential for understanding the likelihood of an SIE.
\end{abstract}

\section{Introduction}\label{sec:intro}

Frontier AI systems are increasingly being used to accelerate R\&D into future AI systems. This creates a positive feedback loop where the development of increasingly capable AI systems can aid in the development of even more advanced successor systems. Some version of this \textit{recursive self-improvement} loop already operates today, with AI companies reporting internal research speedups from using AI  \citep{favaro2026ai,openai2026acceleration}. If this loop becomes strong enough to sustain itself---accelerating AI progress even with no growth in compute or human researchers---the result would be a \textit{software-driven intelligence explosion} (SIE), potentially leading to runaway AI progress. Understanding the conditions for an SIE is one of the most important questions in AI today. 

At the same time, there is no commonly agreed-upon yardstick of AI research automation, and no clear understanding of what data we would need to measure the recursive self-improvement loop. Frontier developers each report different statistics on research acceleration, and it is unclear whether these statistics are the most important ones for forecasting an SIE. Thus, we need a conceptual framework not only to tell us whether to expect an SIE, but also to understand what data will actually resolve uncertainties around an SIE. 

To establish the conditions for an SIE and key sources of uncertainty, I formalize a model of AI research in which AI progress feeds back into itself through two margins. \textit{Capability} is the share of AI research tasks that AI can perform, while \textit{efficiency} is AI research output per unit of compute on automated tasks. I use this model to establish the conditions for an SIE, and then calibrate the model with data on AI progress to see whether these conditions hold in recent AI progress, and which parameters are the most important for resolving uncertainty about an SIE.

The benchmark model of idea production \citep{jones1995rd} summarizes research with a scalar stock of efficiency units. The pace of technological progress is entirely a function of this stock, with no distinction between types of researchers. In this representation, sufficiently many low-productivity researchers substitute for any high-productivity researcher. The same logic makes sufficiently many copies of a less capable AI a substitute for a more capable model (or human). For example, Alec Radford is a renowned AI researcher who led the development of OpenAI's original GPT models. But the benchmark model implies there is some $N$ such that $N$ freshman computer-science majors do better AI research than Alec Radford. This is an implausible assumption in the context of AI research. 

I replace scalar research progress with the task-based model of \citet{aghion2019artificial}, applied to AI research itself, with compute in place of capital. AI progress feeds back into AI research through two channels. Capability improvements expand the set of AI research tasks that AI can successfully perform; efficiency improvements increase the amount of AI research labor supplied by a given compute stock, so they are analogous to population growth for automated researchers. The compute that matters for research automation is \emph{inference} compute, the compute used to run trained AI models, as distinct from the \emph{training} compute used to create them. I show that the speed of AI progress is determined by the slower of these two margins, so an SIE requires both margins to be sufficiently fast. Efficiency improvements alone eventually run into Baumol effects, where non-automated tasks bottleneck AI progress. Capability improvements alone fail because automating a larger share of tasks only spreads the fixed compute stock thinly across all tasks, without expanding the population of automated researchers.

In the second half of the paper, I quantify efficiency progress and capability progress using recent data on AI progress, to test whether the conditions for an SIE hold. Each leg of the SIE condition can be expressed as a ratio of an observable growth rate to the growth rate of research inputs, in the manner of the returns-to-software parameter of \citet{eth2025software}. The efficiency leg is the rate of algorithmic efficiency progress on a fixed set of tasks, which I take from the literature. My main empirical contribution is to quantify capability growth in a way that connects to the conditions for an SIE. To do so, I use the \emph{time horizon} measured by METR, an AI evaluation nonprofit: the length of task, in the time it takes a skilled human, that an AI model completes with 50\% probability. On a suite of software and research tasks with measured human completion times, the time horizon of frontier models has doubled every four to seven months \citep{metr2026th11}. Combining this horizon trend with the distribution of human completion times across tasks, and the assumption that this distribution has a Pareto tail, gives the growth rate of capabilities.

In my baseline calibration, point estimates imply both the efficiency and capability conditions for an SIE are met: capability returns are $\rCbase$ (95\% interval $\rClo$--$\rChi$), and efficiency returns are $\rEbase$ ($\rElo$--$\rEhi$). Thus, efficiency rather than capability is the binding constraint on an SIE, but the 95\% intervals include both conditions failing. In a Monte Carlo simulation over the uncertainty in each input, I find that an SIE occurs in \pbothpct{} of draws. Thus, an SIE is plausible. The more robust finding is that efficiency, rather than capability, is the likely bottleneck: the efficiency returns are the lower of the two in \pEbindspct{} of draws.

The main takeaway of the quantitative exercise is therefore not a precise number for the likelihood of an SIE, but a magnifying glass on the most important uncertainty behind it: the rate of efficiency progress. Learning only the rate of efficiency progress would resolve \sieshareGnu{} of the uncertainty about whether an SIE occurs, more than all other inputs combined, which together would resolve \sieshareOthers{}. Across draws, the share that are an SIE rises from \sieGnuLowpct{} among those with the slowest fifth of efficiency progress to \sieGnuHighpct{} among those with the fastest. It is also the input on which existing estimates disagree most, by an order of magnitude across sources. Better measurement of efficiency progress, and of how it responds to research effort, is therefore the most valuable step toward knowing whether an SIE is plausible.

The rest of the paper is organized as follows. Section~\ref{sec:literature} relates the paper to the literature. Section~\ref{sec:env} lays out the model. Section~\ref{sec:sie} derives the two-legged condition for an SIE. Section~\ref{sec:quant} maps the model to data and estimates the two margins of progress. Section~\ref{sec:conclusion} concludes and discusses policy implications.

\section{Related Literature}\label{sec:literature}

This paper is part of the still-too-small literature that models AI progress using endogenous growth theory. Canonically, these papers \citep{eth2025software,davidsonhoulden2025quick,aifutures2025takeoff} have used a benchmark endogenous growth model in the style of \citet{jones1995rd}, which measures research in scalar efficiency units and so elides the capability-efficiency distinction that I focus on. Some recent papers in this literature have taken a different path. \citet{davidson2026networks} focus on networked spillovers and feedback loops in the macroeconomic setting, with AI as one sector in an economywide feedback loop; they do use a task automation model to understand the importance of automation speed (which is analogous to capability growth rates in my model). \citet{cunningham2026apples} model AI research as an apple-picking process, where AI can pick apples up to a certain ``height'', corresponding to task automation in my model. However, they model AI labor as abundant, ignoring the efficiency dimension that I focus on. \citet{cunningham2026rsi} apply a similar framework to model AI research automation; they do highlight the capability-efficiency distinction. However, their framework focuses on measuring reduced-form elasticities, whereas I derive structural elasticities from a task-based model. This requires more assumptions, but is better able to use historical data. 

Methodologically, my model is a task automation model, with similarities to others deployed in the literature, including those used to study the effect of AI on economic growth \citep{aghion2019artificial, jonestonetti2026weaklinks}. The mechanics of my model are almost identical to those; the feedback loop over effective compute echoes the positive feedback from capital in a standard growth model, and many capital-accumulation mechanisms carry over to the setting of AI research.  My main theoretical contribution is to show that the task automation structure implies a two-legged condition for an SIE. Secondarily, while previous papers start from the premise of full automation of AI research to derive the conditions for an SIE \citep{eth2025software,davidsonhoulden2025quick}, I derive conditions based on the \emph{speed} of AI research automation, which is observable today. 

Empirically, I draw on the literature that estimates AI progress: fixed-task estimates of algorithmic efficiency \citep{ho2024algorithmic,erdil2022cv,gundlach2025origin}, the time-horizon measurements of \citet{kwa2025horizons} and \citet{metr2026th11} and their decomposition into compute and software \citep{whitfill2025horizon}, and estimates of the substitutability of compute for research labor \citep{whitfillwu2025compute}. In calibrating the efficiency leg of AI progress, I largely rely on this literature. My main empirical contribution is to connect the growth of AI capabilities to data on the time horizons of tasks that AI can complete, and using this calibration to assess the likelihood of an SIE.

\section{Model}\label{sec:env}

AI development requires a continuum of research tasks $i \in [0,1]$, to be performed by a fixed stock $L>0$ of human researchers or a fixed stock $Z>0$ of physical compute (on which copies of the frontier AI model run).\footnote{I hold physical compute fixed because on the short timescales for which people worry about an SIE, compute is unlikely to be part of the feedback loop.}

A key feature of my model is that ``compute'' denotes \emph{inference compute used to run automated researchers}. This is an important deviation from standard ways of thinking about the role of compute in AI progress. Compute is most commonly identified with training compute, because scaling laws suggest that scaling training compute increases AI capabilities. It is also sometimes identified with experimental compute, which could be a strong complement to research labor, and thus could bottleneck AI progress \citep{whitfillwu2025compute}. While both of these are reasonable uses of the term ``compute'', neither is applicable to this model. This is an important restriction: it means that my model does not describe the conditions under which training compute bottlenecks or experimental compute bottlenecks could prevent an SIE. My model focuses only on the conditions for when AI cognitive labor can substitute for human cognitive labor. A full accounting of all the roles that compute plays in AI research, and an endogenous allocation between all of these roles, is left to future research.

\paragraph{AI progress.} The rate of change $\dot{A}_t$ of frontier AI technology is driven by the level of current technology $A_t$ and research output $R_t$, as in the benchmark semi-endogenous model.

\begin{equation}\label{eq:innovation}
\dot{A}_t \;=\; \delta\, R_t^{\lambda}\, A_t^{1-\beta}, \qquad \delta > 0,\ \beta > 0,\ 0 < \lambda \le 1
\end{equation}

Research output enters \eqref{eq:innovation} with elasticity $\lambda$, the stepping-on-toes parameter: when $\lambda<1$, parallel research effort duplicates itself, so doubling $R_t$ less than doubles the rate of progress. Technology enters with elasticity $1-\beta$: doubling $A_t$ less than doubles the rate of progress, representing ideas getting harder to find.

\paragraph{Research production.} AI research requires performing tasks indexed by $i \in [0, 1]$. Research task $i$ can either be human-only, or automatable (can be performed by AI). Task-specific output is given by

\begin{equation}\label{eq:task}
r_t(i) \;=\; \begin{cases} \ell_t(i) + \nu(A_t)\, z_t(i), & i \le \alpha(A_t), \\[2pt] \ell_t(i), & i > \alpha(A_t) \end{cases}
\end{equation}

Here, $\nu(A_t)$ is efficiency, the task output that can be produced with one unit of compute on an automated task, and $\alpha(A_t)$ is capability, the share of tasks that can be performed by AI at all; both depend on the level of AI technology, as specified below.

Tasks are complements in aggregate research output, being combined with a constant elasticity of substitution $0<\sigma<1$:
\begin{equation}\label{eq:ces}
R_t \;=\; \left( \int_0^1 r_t(i)^{\frac{\sigma-1}{\sigma}} \dd i \right)^{\frac{\sigma}{\sigma-1}}
\end{equation}

\paragraph{Capability and efficiency improvements.} Efficiency $\nu$ and capability $\alpha$ are determined by the level of AI technology:

\begin{equation}\label{eq:breadth}
\alpha(A_t) = 1 - A_t^{-\mu}, \qquad \mu \ge 0,\ A_t \ge A_0 \ge 1,
\end{equation}

\begin{equation}\label{eq:depth}
\nu(A_t) = c\, A_t^{\eta}, \qquad c > 0,\ \eta \ge 0
\end{equation}

Both of these equations parameterize capability and efficiency in terms of elasticities. $\mu$ is the capability elasticity and $\eta$ is the efficiency elasticity: a 1\% increase in the technology level raises AI research input per unit of compute by $\eta$\% and shrinks the human-only task share $1-\alpha$ by $\mu$\%.

Of course, parameterizing capability and efficiency this way assumes that their elasticities are constant, no matter the level of technology. This is a particularly important assumption when it comes to capability: Equation~\eqref{eq:breadth} implies that $\alpha < 1$, and thus AI research is \emph{never} fully automated. \citet{eth2025software} start from the premise of fully-automated AI research, and then derive the conditions for an SIE, so this may at first seem to preclude the possibility of an SIE. As I show below, this is not true; rather, it casts the condition for an SIE in terms of \textit{how fast capabilities grow} instead of \textit{whether research has been fully automated}, and the former is much easier to measure, which makes empirical analysis of an SIE more tractable.

Define \emph{effective compute} as the AI research input supplied by physical compute at the current level of efficiency,
\[
X_t \;\equiv\; \nu(A_t)\, Z ,
\]
with units chosen so that $X_t$ is measured in researcher-equivalents. Its physical component $Z$ is a fixed stock, and its software component $\nu(A_t)$ is produced by the loop itself.

Finally, the resource constraint is that researchers and physical compute applied across tasks exhaust the total available researchers and compute. 

\begin{equation}\label{eq:labor}
\int_0^1 \ell_t(i) \dd i \;=\; L,
\end{equation}
\begin{equation}\label{eq:compute}
\int_0^{1} z_t(i) \dd i \;=\; Z
\end{equation}
Table~\ref{tab:environment} collects the environment.

\begin{table}[t]
\centering
\small
\renewcommand{\arraystretch}{1.35}
\begin{tabular}{@{}p{0.39\textwidth}p{0.55\textwidth}@{}}
\toprule
Block & Statement \\
\midrule
Technology: law of motion & $\dot{A}_t = \delta R_t^{\lambda} A_t^{1-\beta}$ \\
Technology: capability & $1 - \alpha(A_t) = A_t^{-\mu}$ \\
Technology: efficiency & $\nu(A_t) = c A_t^{\eta}$ \\
Research tasks & $r_t(i) = \ell_t(i) + \mathbf{1}\{i \le \alpha(A_t)\} \cdot \nu(A_t)z_t(i)$ \\
Research aggregation & $R_t = \big(\int_0^1 r_t(i)^{\frac{\sigma-1}{\sigma}} \dd i\big)^{\frac{\sigma}{\sigma-1}}$ \\
Resources: researchers & $\int_0^1 \ell_t(i)\dd i = L$ \\
Resources: physical compute & $\int_0^{1} z_t(i)\dd i = Z$ \\
\bottomrule
\end{tabular}
\caption{The economic environment.}
\label{tab:environment}
\end{table}

To close the model, we need an allocation rule for labor and compute. I assume that the allocation is chosen to maximize aggregate research output $R$, consistent with labor and compute being allocated internally to a profit-maximizing AI developer. Define \emph{research capacity} as this maximized $R^*$.

\begin{proposition}[Task Allocation]\label{prop:allocation}
Define $\sat \equiv \frac{X}{L+X}$. Research capacity is
\begin{equation}\label{eq:Rfunction}
R^*(\alpha; L, X) \;=\;
\begin{cases}
\left[\, \alpha^{\frac{1}{\sigma}} X^{\frac{\sigma-1}{\sigma}} + (1-\alpha)^{\frac{1}{\sigma}} L^{\frac{\sigma-1}{\sigma}} \,\right]^{\frac{\sigma}{\sigma-1}}, & \alpha \le \sat,\\[4pt]
L + X, & \alpha \ge \sat.
\end{cases}
\end{equation}
\end{proposition}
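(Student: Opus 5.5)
The approach is to maximize the CES aggregator \eqref{eq:ces} over $\{\ell(i),z(i)\}$ subject to \eqref{eq:labor}--\eqref{eq:compute}. Since $0<\sigma<1$, the map $r\mapsto r^{(\sigma-1)/\sigma}$ is convex and decreasing. Maximizing $R$ is therefore equivalent to minimizing the strictly convex functional $\int_0^1 r(i)^{(\sigma-1)/\sigma}\dd i$. Strict convexity means any optimum assigns equal $r(i)$ to tasks that are symmetric in the constraints.

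First reduce to aggregate inputs. Compute can only be used productively on $[0,\alpha]$, and there one unit of compute yields $\nu$ units of output, perfectly substitutable with labor. So define effective input $e(i)=\ell(i)+\nu z(i)$ on automated tasks. Wasting compute on $i>\alpha$ is weakly dominated, so the feasible set of task-output profiles is
\[
r(i)\ge 0, \quad \int_\alpha^1 r(i)\dd i \le \textstyle\int_\alpha^1\ell, \quad \int_0^\alpha r(i)\dd i = X+\int_0^\alpha \ell,
\]
with total labor equal to $L$. By Jensen (convexity), within each block the optimum equalizes $r(i)$. Writing $L_A$ for labor on automated tasks, automated tasks get $r_A=(X+L_A)/\alpha$ and human-only tasks get $r_H=(L-L_A)/(1-\alpha)$. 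The problem becomes the one-dimensional minimization over $L_A\in[0,L]$ of
\[
\phi(L_A)=\alpha\, r_A^{(\sigma-1)/\sigma}+(1-\alpha)\, r_H^{(\sigma-1)/\sigma}.
\]

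Next, characterize the optimum. $\phi$ is convex in $L_A$, and its derivative has the sign of $r_H^{-1/\sigma}-r_A^{-1/\sigma}$. Because the decreasing convex integrand penalizes low outputs, labor should move toward whichever block has lower output. This gives two cases:

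\begin{itemize}
\item If at $L_A=0$ we already have $r_A\ge r_H$, i.e.\ $X/\alpha\ge L/(1-\alpha)$, equivalently $\alpha\le X/(L+X)=\sat$, then the corner $L_A=0$ is optimal. Substituting $r_A=X/\alpha$ and $r_H=L/(1-\alpha)$ gives $R=[\alpha^{1/\sigma}X^{(\sigma-1)/\sigma}+(1-\alpha)^{1/\sigma}L^{(\sigma-1)/\sigma}]^{\sigma/(\sigma-1)}$.
\item If $\alpha>\sat$, the interior first-order condition $r_A=r_H$ holds. All tasks then receive $r=L+X$, so $R=L+X$. This is the unconstrained optimum, since Jensen over all of $[0,1]$ with total input at most $L+X$ bounds $R\le L+X$.
\end{itemize}

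At $\alpha=\sat$ the two formulas coincide, which gives the weak inequalities in both branches.

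The main obstacle is the reduction step. It must be justified rigorously that the optimum allocates no compute to human-only tasks. It must also be shown that the block-equalization argument is valid with a continuum of tasks: use Jensen on each block and note that any profile with $\int r\le$ the available resources is achievable. The rest is routine algebra.
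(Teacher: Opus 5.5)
Your proposal is correct and follows essentially the same route as the paper's proof. You drop compute from human-only tasks, use Jensen's inequality to equalize task output within each block, and then minimize the resulting strictly convex one-dimensional function of labor on automated tasks, with the sign of its derivative at $\ell_A=0$ (equivalently, whether $X/\alpha\ge L/(1-\alpha)$, i.e.\ $\alpha\le\alpha^{*}$) selecting the corner branch over the interior branch where $r_A=r_H=L+X$. The paper handles the reduction step you flag as the main obstacle just as briefly, since compute on tasks $i>\alpha$ contributes nothing to output.
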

\noindent\emph{Proof.} See Appendix~\ref{app:proof-allocation}.

Below the threshold, research capacity is a CES aggregate of the two factor stocks. The factor weights $(1-\alpha)^{1/\sigma}$ on labor $L$ and $\alpha^{1/\sigma}$ on effective compute determine how important each factor is in generating research output. Capability improvements (increases in $\alpha$) thus shift the weight from labor $L$ to effective compute $X$. Assuming $X > L$, this means that increases in $\alpha$ increase research capacity $R^*$, even if both $X$ and $L$ are fixed. The labor-augmented term $L(1-\alpha)^{-1/(1-\sigma)}$ is the most research that can be produced at capability $\alpha$, even if effective compute becomes abundant (Lemma~\ref{lem:bounds} in Appendix~\ref{app:proof-bounds}), and it is essential for the capability leg of the explosion condition below.

The threshold $\sat = X/(L+X)$ is the point of full factor utilization. Past it, compute has been spread so thin that per-task AI input falls below what concentrated researchers supply elsewhere; researchers re-enter automated tasks until inputs equalize, and research capacity is just the total factor supply, $L+X$.

To visualize the relationship between efficiency and capability in producing research effort, Figure~\ref{fig:output-map} plots (normalized) research effort $R^*/L$ as a function of (normalized) efficiency $X/L$ and capability $\alpha$, assuming an elasticity $\sigma=\sigmabase$ that will eventually be used in the calibration of Section~\ref{subsec:capability-leg}. This demonstrates what will later become explicit---that research effort is only high when both efficiency and capability are high. 

\begin{figure}[tbp]
\centering
\includegraphics[width=0.85\textwidth]{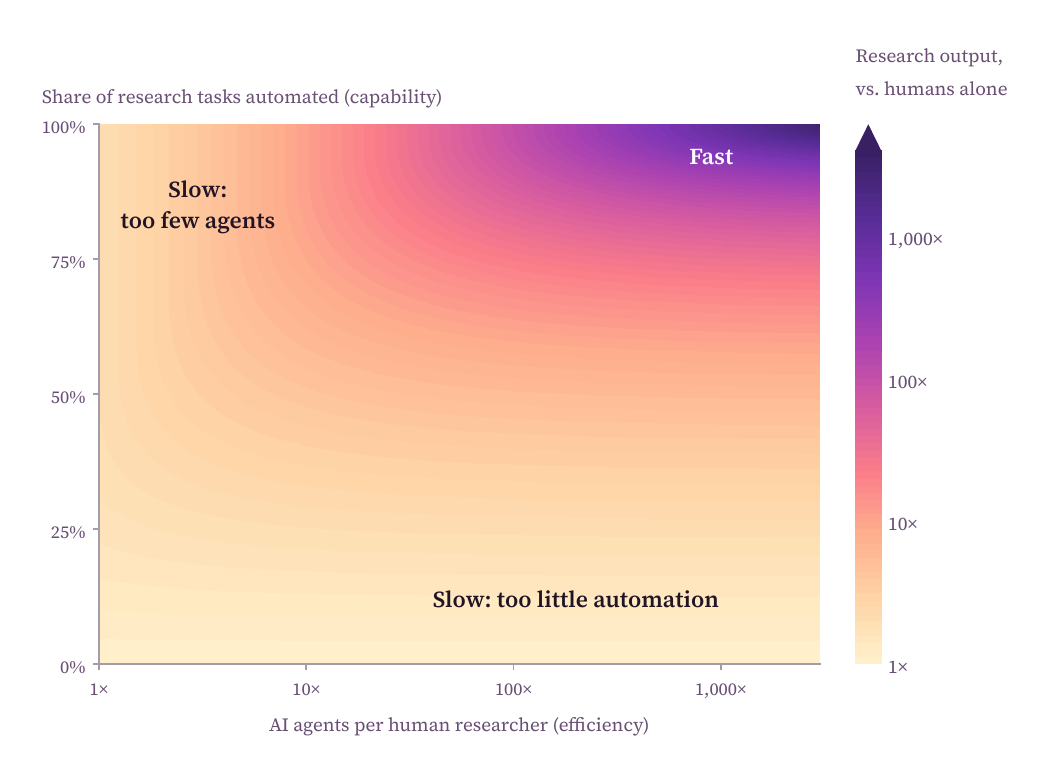}
\caption{Normalized research capacity $R^*/L$ as a function of normalized efficiency $X/L$ and capability $\alpha$, at $\sigma=\sigmabase$.}
\label{fig:output-map}
\end{figure}

\section{Explosion Conditions}\label{sec:sie}

\begin{definition}[Software Intelligence Explosion]\label{def:sie}
Holding $L$ and $Z$ fixed, the model exhibits a software intelligence explosion if AI progress eventually raises the growth rate of the technology level, $g_A = \dot A_t/A_t$, at a sustained proportional rate:
\begin{equation}\label{eq:sie-definition}
\liminf_{A\to\infty}\frac{\dd\ln g_A}{\dd\ln A}>0
\end{equation}
Locally, acceleration occurs when
\[
\frac{\dd\ln g_A}{\dd\ln A}>0
\quad\Longleftrightarrow\quad
\frac{\dd\ln\dot A}{\dd\ln A}>1
\]
The limit condition in \eqref{eq:sie-definition} distinguishes a self-sustaining acceleration from a temporary rise in $g_A$ or convergence to a constant growth rate.
\end{definition}

Another reasonable definition of an SIE would be as a technological singularity, where $A_t$ becomes infinite in finite time. In this model, there is no distinction; the results below hold for both definitions.

\begin{proposition}[Asymptotic Growth]\label{prop:asymptotic}
Define the efficiency returns, the capability returns and the binding returns as
\[
r_E\equiv\frac{\lambda\eta}{\beta},\qquad r_C\equiv\frac{\lambda\mu}{(1-\sigma)\beta},\qquad r^{*}\equiv\min\{r_E,r_C\}.
\]
Holding $L$ and $Z$ fixed,
\begin{equation}\label{eq:asymptotic}
\lim_{A\to\infty}\frac{\dd\ln g_A}{\dd\ln A}\;=\;\beta\,(r^{*}-1),
\qquad\text{equivalently}\qquad
\lim_{A\to\infty}\frac{\dd\ln g_A}{\dd\ln A^{\beta}}\;=\;r^{*}-1 .
\end{equation}
\end{proposition}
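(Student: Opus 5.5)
Since $g_A=\dot A/A=\delta R^{*\lambda}A^{-\beta}$, we have
\[
\frac{\dd\ln g_A}{\dd\ln A}=\lambda\,\frac{\dd\ln R^*}{\dd\ln A}-\beta .
\]
The whole proof therefore reduces to computing $\lim_{A\to\infty}\dd\ln R^*/\dd\ln A$ along the path $\alpha(A)=1-A^{-\mu}$, $X(A)=cZA^{\eta}$, and showing it equals $\min\{\eta,\mu/(1-\sigma)\}$. Multiplying by $\lambda$ and subtracting $\beta$ then gives $\beta(r^*-1)$. Dividing by $\beta$, since $\dd\ln A^\beta=\beta\,\dd\ln A$, gives the second form of \eqref{eq:asymptotic}.

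First I would locate the regime. As $A\to\infty$, $1-\sat=L/(L+X)\sim (L/cZ)A^{-\eta}$, while $1-\alpha=A^{-\mu}$. So $\alpha\le\sat$ eventually if $\mu<\eta$, and $\alpha\ge\sat$ eventually if $\mu>\eta$. In the second case $R^*=L+X$, so $\dd\ln R^*/\dd\ln A\to\eta$. This equals the minimum because $\mu/(1-\sigma)>\mu>\eta$.

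In the interior regime I would use the CES formula from Proposition~\ref{prop:allocation}. Set $\rho=(\sigma-1)/\sigma<0$ and write
\[
R^{*\rho}=\alpha^{1/\sigma}X^{\rho}+(1-\alpha)^{1/\sigma}L^{\rho}.
\]
The two terms behave like $A^{\eta\rho}$ (since $\alpha\to1$) and $A^{-\mu/\sigma}L^{\rho}$. Since $\rho<0$, $R^{*\rho}$ is dominated by the term that decays more slowly, and the exponent of $R^*$ is the smaller of $\eta$ and $(\mu/\sigma)/(-\rho)=\mu/(1-\sigma)$. The second of these corresponds to the labor-augmented bound $L(1-\alpha)^{-1/(1-\sigma)}$ of Lemma~\ref{lem:bounds}. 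Differentiating $\ln R^*$ gives
\[
\frac{\dd\ln R^*}{\dd\ln A}=s_X\Big(\eta+\tfrac{1}{\sigma\rho}\tfrac{\dd\ln\alpha}{\dd\ln A}\Big)+s_L\,\tfrac{\mu}{1-\sigma}.
\]
Here $s_X$ and $s_L$ are the shares of the two terms in $R^{*\rho}$, I have used $\tfrac{1}{\sigma\rho}\cdot\tfrac{-\dd\ln(1-\alpha)}{\dd\ln A}=\tfrac{\mu}{1-\sigma}$ (the minus signs cancel), and $\dd\ln\alpha/\dd\ln A=\mu A^{-\mu}/(1-A^{-\mu})\to0$. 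The shares converge to $0$ or $1$ according to which exponent is smaller, so the derivative converges to the minimum.

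The main obstacle is the knife-edge and boundary cases.

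\begin{itemize}
\item If $\eta=\mu/(1-\sigma)$, the shares converge to interior constants. The derivative is then a convex combination of two equal limits, so the result still holds.
\item If $\mu=\eta$, the path may oscillate around or stay near $\sat$. I would check the limiting ratio $(1-\alpha)/(1-\sat)\to cZ/L$ to decide the regime. On the boundary the two branches of \eqref{eq:Rfunction} agree, and $R^*$ is $C^1$ across it because the allocation is optimal (envelope argument). Both branches give derivative $\to\eta=\mu<\mu/(1-\sigma)$, so the limit is $\eta$ either way.
\item The degenerate cases $\mu=0$ or $\eta=0$ give limit $0=r^*$, which is consistent with the formula.
\end{itemize}
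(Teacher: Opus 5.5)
Your proposal is correct and follows the paper's route. You reduce to the elasticity of $R^*$ and show that one branch of \eqref{eq:Rfunction} eventually applies by comparing $1-\alpha=A^{-\mu}$ with $1-\sat$; the flat branch gives $\eta$, and on the corner branch the share weights tend to $0$ or $1$ according to the sign of $\mu-(1-\sigma)\eta$, with equal limits in the tie. Two small points: the identity you cite should read $\tfrac{1}{\sigma\rho}\cdot\tfrac{\dd\ln(1-\alpha)}{\dd\ln A}=\tfrac{\mu}{1-\sigma}$, since your extra minus flips the sign (your displayed derivative formula is nonetheless right); and when $\mu=\eta$ no oscillation can occur, because $(1-\alpha)/(1-\sat)=A^{-\mu}+cZ/L$ is strictly decreasing, so the regime is eventually fixed (corner iff $cZ\ge L$, as in the paper) and your envelope fallback, though valid, is unnecessary.
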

\noindent\emph{Proof.} See Appendix~\ref{app:proof-asymptotic}.

The elasticity of research capacity with respect to technology converges to the smaller of two numbers: $\eta$, the rate at which effective compute grows, and $\mu/(1-\sigma)$, the rate at which the shrinking human-only task share relaxes its constraint on research. Duplication scales that elasticity by $\lambda$, and idea difficulty subtracts $\beta$. Integrating \eqref{eq:innovation} gives $A_t^{\beta}=A_0^{\beta}+\beta\delta\int_0^t R_\tau^{\lambda}\dd\tau$, so $A^{\beta}$ grows with cumulative research net of duplication, so the second form of \eqref{eq:asymptotic} reads: each percent of cumulative research raises the growth rate by $r^{*}-1$ percent. When $r^{*}<1$ the growth rate falls at that rate and the technology level grows sub-exponentially.

The obstacle that an SIE has to overcome is diminishing returns to research effort, captured by $\beta$: ideas get harder to find \citep{bloom2020ideas}. Each idea can only be found once, and AI research has used up some of its easiest gains, such as removing duplicated text from web-scraped training data \citep{lee2022dedup} and correcting the balance between model size and training data for a given compute budget \citep{hoffmann2022chinchilla}. Each further improvement in AI therefore requires more research effort than the last.

The two ratios are the objects that matter for the empirical analysis in Section~\ref{sec:quant}: the technology level has no natural units, so the elasticities $\eta$, $\mu$ and $\beta$ are identified from data only up to a common scale and $\lambda$ only jointly with them, whereas $r_E$ and $r_C$ are identified directly.

\begin{corollary}[SIE Conditions]\label{cor:sie}
The model exhibits a software intelligence explosion if and only if $r^{*}>1$, that is,
\begin{equation}\label{eq:jointgeometric}
\boxed{\;
\underbrace{\lambda\eta>\beta}_{\text{efficiency leg}}
\quad\text{and}\quad
\underbrace{\lambda\mu>(1-\sigma)\beta}_{\text{capability leg}}
\; }
\end{equation}
The same conditions are necessary and sufficient for $A_t$ to diverge in finite time. If $r^{*}=1$, the growth rate converges to a positive constant.
\end{corollary}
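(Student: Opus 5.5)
The plan is to read Corollary~\ref{cor:sie} off Proposition~\ref{prop:asymptotic}, and then add a separate argument for finite-time divergence and for the knife-edge case $r^*=1$. Since $\beta>0$, Proposition~\ref{prop:asymptotic} gives $\lim_{A\to\infty}\dd\ln g_A/\dd\ln A=\beta(r^*-1)$. When the limit exists, the $\liminf$ in \eqref{eq:sie-definition} equals it. So the definition of an SIE holds if and only if $\beta(r^*-1)>0$, that is, $r^*>1$. Writing out $r^*=\min\{r_E,r_C\}>1$ gives $\lambda\eta/\beta>1$ and $\lambda\mu/((1-\sigma)\beta)>1$. Multiplying through by the positive quantities $\beta$ and $(1-\sigma)\beta$ (using $0<\sigma<1$) gives the two boxed legs.

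One gap needs care first: an SIE requires $A\to\infty$. Since $R^*\ge L^{\lambda}$-type lower bounds hold, $\dot A\ge \delta\,\underline{R}^{\lambda}A^{1-\beta}>0$ along the path. Hence $A^\beta$ grows at least linearly and $A_t\to\infty$, so the asymptotic regime of Proposition~\ref{prop:asymptotic} is actually reached.

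For finite-time divergence I would work with $\dot A$ as a function of $A$. Write $h(A)\equiv\dd\ln\dot A/\dd\ln A=1+\dd\ln g_A/\dd\ln A$, which tends to $1+\beta(r^*-1)$.

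If $r^*>1$, pick $\epsilon>0$ with $1+\beta(r^*-1)-\epsilon>1$. Then for $A\ge \bar A$ we have $\dot A\ge K A^{1+\kappa}$ for some $K,\kappa>0$. Comparison with the ODE $\dot y=Ky^{1+\kappa}$ gives blow-up in finite time, since $\int^\infty dA/A^{1+\kappa}<\infty$. More directly, the time to reach infinity is $\int_{A_0}^\infty dA/\dot A(A)<\infty$.

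Conversely, if $r^*<1$, then eventually $\dot A\le K A^{1-\kappa'}$ with $\kappa'>0$. So $A$ grows at most polynomially and never diverges in finite time. The same holds for $r^*=1$, because $\int^\infty dA/\dot A=\infty$ whenever $\dot A\le K A^{1+o(1)}$ with the right control.

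The main obstacle is the boundary case $r^*=1$, because there the limit $\beta(r^*-1)=0$ carries no rate information. I would use the explicit asymptotics behind Proposition~\ref{prop:asymptotic} rather than only the limit. In the efficiency-binding case, $R^*\sim L+X$ grows like $A^{\eta}$. In the capability-binding case, $R^*$ is governed by the labor-augmented bound $L(1-\alpha)^{-1/(1-\sigma)}=LA^{\mu/(1-\sigma)}$ of Lemma~\ref{lem:bounds}. In either case I expect $R^{*\lambda}=C A^{\beta r^*}(1+o(1))$ for a constant $C>0$.

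At $r^*=1$ this gives $g_A=\dot A/A=\delta R^{*\lambda}A^{-\beta}\to\delta C>0$, which is the claimed positive constant growth rate. Verifying this claim requires showing that the correction terms converge rather than merely being $o(\ln A)$. When $r_E=r_C$, I would check this directly from the CES formula \eqref{eq:Rfunction} with both $X/L$ and $(1-\alpha)^{-1}$ growing as powers of $A$. The same expansion also gives $\int^\infty dA/\dot A=\infty$, which rules out finite-time divergence and completes the equivalence.
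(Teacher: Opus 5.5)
Your proposal is correct and follows essentially the paper's own route. The equivalence is read off Proposition~\ref{prop:asymptotic}, and both the knife-edge claim $g_A\to\delta K^{\lambda}$ and the finite-time-divergence dichotomy rest on the level asymptotic $R^*/A^{m}\to K>0$ with $m=\min\{\eta,\mu/(1-\sigma)\}$; the paper proves this inside the proof of Proposition~\ref{prop:asymptotic} by the same case analysis you sketch, including the tie $r_E=r_C$.

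Two small points. First, your argument that $A_t\to\infty$ is fine once stated precisely as $R^*\ge L$, so that $A^{\beta}$ grows at least linearly. Second, drop your interim remark that $\dot A\le KA^{1+o(1)}$ would settle $r^*=1$ and rely on the level bound instead, since an $o(1)$ exponent alone still permits blow-up, e.g.\ $\dot A=A(\ln A)^2$.
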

\noindent\emph{Proof.} See Appendix~\ref{app:proof-sie}.

Each leg is a rate of return: if cumulative research effort doubles, how many times does the margin double? For efficiency, the margin is effective compute $X$, the supply of automated research labor. For capability, it is the human-bottleneck bound $\hbound{\alpha}=L(1-\alpha)^{-1/(1-\sigma)}$ of Lemma~\ref{lem:bounds}: the most research capacity that the economy could produce at capability $\alpha$ if effective compute were unlimited. Capability progress raises this ceiling but supplies no automated labor; research capacity approaches the ceiling only as effective compute grows. When the return exceeds one, each doubling of research effort yields more than one doubling of the margin, which more than pays for the next doubling of research effort, so progress accelerates. When it falls below one, each round of progress buys less than the next round costs, and growth slows.

Each inequality rules out a different failure. When either inequality fails strictly and is the tighter constraint, that margin eventually pulls the growth rate down.

\emph{Efficiency alone.} Suppose efficiency improves fast enough to overcome idea difficulty, but capability does not: $\lambda\eta>\beta$ but $\lambda\mu\le(1-\sigma)\beta$. Productivity rises rapidly on automated tasks, but the complementary human-only task share contracts too slowly. Those tasks become an increasingly important share of the effective cost of research and eventually govern aggregate progress. This is the familiar Baumol effect, and the reason why \citet{aghion2019artificial} emphasize full automation as the condition for explosive economic growth: rapid progress in some tasks cannot sustain aggregate growth when essential laggard tasks remain.

Formally, research capacity never exceeds $L(1-\alpha)^{-1/(1-\sigma)}$ (Lemma~\ref{lem:bounds} in Appendix~\ref{app:proof-bounds}), so for any efficiency path,
\begin{equation}\label{eq:capability-failure-bound}
g_A
\le
\delta L^{\lambda} A^{\lambda\mu/(1-\sigma)-\beta}
\end{equation}
Since $\lambda\mu/(1-\sigma)-\beta \le 0$ in this scenario, $g_A$ is bounded by $\delta L^{\lambda}$ at equality and falls to zero when the inequality is strict. Growth slows down as humans become the limiting factor.

\emph{Capability alone.} The failure of capability improvements without efficiency improvements is more interesting, and is instructive about the difference between AI research and the economic growth context that this model is derived from. Capability improvements enter research capacity only through the weights in \eqref{eq:Rfunction}: they spread the fixed rival compute stock over a larger set of tasks, the dilution $\alpha^{-1/(1-\sigma)}X$ in the factor-augmenting form above, but add no effective compute. The role of efficiency improvements is therefore analogous to the role of population growth in a semi-endogenous growth model: they are population growth for AI labor, expanding the number of automated researchers available to do AI research. Without them, the population of automated researchers is fixed, and a fixed population cannot generate exponential growth in a semi-endogenous model. This is why efficiency improvements are essential for sustained accelerations in AI progress.

Formally, research capacity never exceeds total factor supply, $R^*\le L+X$ (Lemma~\ref{lem:bounds}), and effective compute is $X=cZA^{\eta}$, so for any capability path
\begin{equation}\label{eq:efficiency-failure-bound}
g_A
\le
\delta\,(L+cZA^{\eta})^{\lambda}\,A^{-\beta},
\end{equation}
which falls to zero when $\lambda\eta<\beta$ and is bounded by $\delta(L+cZ)^{\lambda}$ when $\lambda\eta=\beta$.

\section{Calibration}\label{sec:quant}

\subsection{Empirical Mapping}\label{subsec:mapping}

To map the theory to data, I first clarify the empirical counterparts to the model objects. In particular, ``capability'' and ``efficiency'' have subtle differences from the way these terms are ordinarily used. \citet{eth2025software} classify both acquiring new tasks and becoming more competent at tasks a system can already attempt as capability improvements, while achieving a given capability level with less compute is an efficiency improvement. Those definitions are intuitive, and useful for a broad accounting of software progress. But for the purpose of SIE accounting in this model, a different boundary is necessary.

\paragraph{Efficiency.} Efficiency is useful task output per unit of compute on automated tasks. The key feature of this definition is that the efficiency that matters for an SIE is \emph{inference} efficiency, since that determines the quantity of AI labor available for AI research. Thus, the empirical counterpart I use is primarily the rate at which the inference compute needed to reach a fixed benchmark score has fallen \citep{gundlach2026price}, triangulated with estimates of efficiency progress based on pretraining (Section~\ref{subsec:efficiency-leg}) even though these are not the relevant margin of efficiency progress.

The second important and unintuitive element of this definition is that improved quality at performing the same task is an \emph{efficiency} improvement, rather than a capability improvement. If a model's code quality improves for the same compute cost, this may seem like a capability improvement, but for SIE accounting it raises efficiency. Within a task, I assume that quality and cost are interchangeable, so that an AI that can do a task outstandingly well is isomorphic to an AI that can do that task adequately at low cost. Thus, the critique that motivated this model---that the standard model treats quality and quantity as interchangeable---still applies. However, it is a more palatable assumption at the task level than at the level of AI research as a whole. On the whole, this exercise largely tracks previous exercises that aim to calculate the returns to software progress as they relate to an SIE (see Table~\ref{tab:published-r}), and is not conceptually novel. 

\paragraph{Capability.} In this model, capability is the automated task share $\alpha$. Unlike efficiency, the automated share is not observed: there is no comprehensive enumeration of AI research tasks, let alone a record of which of them are automated. \citet{anthropic2026pace} take a step in this direction: they catalogue Anthropic's model R\&D work into 378 task categories, weight each by the staff time spent on it, and rate how far each is automated. They publish only aggregate shares, however, not the task list or the per-task ratings, so their task set cannot be used here. So how can we measure the automated share of tasks? 

Mapping the unobservable task share growth to observable quantities is my main empirical contribution. To do this, we have to augment the model of Section~\ref{sec:env} with four assumptions.

\begin{enumerate}[label=(\alph*)]
\item \emph{Task length.} Index each research task by a scalar difficulty $q>0$, measured as the time a human expert needs to complete it (``task length''). Task length is distributed $q\sim F$.
\item \emph{Capability threshold.} A model at technology level $A$ has some threshold difficulty $h(A)$ such that it can perform every task with $q\le h(A)$ and no task with $q>h(A)$. Define $h(A)$ as the model's \emph{time horizon}.\footnote{\citet{ord2025halflife}'s model of success rates over a time horizon provides a microfoundation for this step: if AIs have a constant probability of failure per unit of human task-time, then the AI success probability falls with human task length, and a single length $h(A)$ separates the tasks a model completes at a given reliability from those it does not.}
\item \emph{Task measure.} The continuum in \eqref{eq:ces} weights tasks by their labor requirement: a task that takes $T$ hours occupies $T$ units of researcher input. $F$ is therefore the distribution of task length across work-hours, $F(T)$ being the fraction of all task-hours in tasks with $q\le T$, and the automated share is $\alpha(A)=F(h(A))$.
\item \emph{Pareto tail.} Above some length $h_0$, $F$ has a Pareto tail with index $k>0$: $1-F(T)=\bar F_0\,(T/h_0)^{-k}$ for $T\ge h_0$, where $\bar F_0\equiv1-F(h_0)\in(0,1]$.
\end{enumerate}

\begin{proposition}[Capability from Time Horizons]\label{prop:horizon}
Under (a)--(d), for every technology level with $h(A)\ge h_0$,
\begin{equation}\label{eq:alpha-horizon}
1-\alpha(A)=\bar F_0\left(\frac{h(A)}{h_0}\right)^{-k},
\qquad
\mu\;=\;k\,\frac{\dd\ln h}{\dd\ln A},
\end{equation}
and along any path of AI progress the human-only task share shrinks at rate
\begin{equation}\label{eq:galpha}
g_\alpha\;\equiv\;-\frac{\dd\ln(1-\alpha)}{\dd t}\;=\;k\,g_h,
\qquad g_h\equiv\frac{\dot h}{h}.
\end{equation}
\end{proposition}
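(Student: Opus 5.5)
The plan is a direct substitution followed by two differentiations. No earlier result is needed beyond the model's definitions of $\alpha$ and $\mu$.

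\textbf{First identity.} By assumption (c), $\alpha(A)=F(h(A))$, so $1-\alpha(A)=1-F(h(A))$. Whenever $h(A)\ge h_0$, assumption (d) applies with $T=h(A)$ and gives the first identity in \eqref{eq:alpha-horizon} immediately. The only point to check is that (b) and (c) together make the automated set exactly $\{q\le h(A)\}$. Under the work-hour weighting of (c), its measure is then $F(h(A))$ and not the share of tasks by count.

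\textbf{Elasticity identity.} Take logs:
\[
\ln(1-\alpha)=\ln\bar F_0+k\ln h_0-k\ln h(A).
\]
Differentiate with respect to $\ln A$; the first two terms are constants. This gives $-\dd\ln(1-\alpha)/\dd\ln A = k\,\dd\ln h/\dd\ln A$. Under \eqref{eq:breadth}, $\ln(1-\alpha)=-\mu\ln A$, so the left side is exactly $\mu$. Hence $\mu=k\,\dd\ln h/\dd\ln A$. I would note that this is an identity linking the two parameterizations. A constant $\mu$ in the model therefore corresponds to a constant elasticity of the horizon with respect to technology, on the region $h\ge h_0$. Alternatively, one can read $\mu$ as the local elasticity $-\dd\ln(1-\alpha)/\dd\ln A$ if $h$ is not assumed isoelastic.

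\textbf{Time-derivative form.} Along a path $A_t$, $h_t=h(A_t)$, and the same log expression holds pointwise in $t$ whenever $h_t\ge h_0$. Differentiating it in $t$ and applying the chain rule gives
\[
-\frac{\dd\ln(1-\alpha)}{\dd t}=k\,\frac{\dot h}{h}=k\,g_h,
\]
which is \eqref{eq:galpha}.

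\textbf{Main obstacle.} The delicate step is differentiability. I would assume $h$ is differentiable in $A$, and that $A_t$ is differentiable in $t$, which holds under \eqref{eq:innovation}. The rest of the argument is bookkeeping about the domain restriction $h\ge h_0$, and that restriction persists once reached if $h$ is nondecreasing along the path.
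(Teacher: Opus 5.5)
Your proposal is correct and follows the paper's proof essentially step for step: use (b) and (c) to get $\alpha(A)=F(h(A))$, apply the Pareto tail (d), take logarithms, and then differentiate once with respect to $\ln A$ (reading $\mu$ as $-\dd\ln(1-\alpha)/\dd\ln A$) and once with respect to time. Your remarks on differentiability, and on the region $h\ge h_0$ persisting once $h$ is nondecreasing, are extra care that the paper leaves implicit.
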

\noindent\emph{Proof.} See Appendix~\ref{app:proof-horizon}.

The intuition is that when we learn how the time horizon of AI models evolves, we learn which task lengths it can do with some threshold probability. If we further know how research work is spread across task lengths, we can convert a longer horizon into a larger automated share. Under the Pareto assumption, each doubling of the horizon automates the same fraction of remaining work, whatever the current horizon. With $k=\kbase$, every doubling of the horizon automates about $1 - 2^{-k} =$ \kdoublingpct{} of the remaining human-only work. The human-only share therefore shrinks at a constant proportional rate, $k$ times the growth rate of the horizon. Thus, Proposition~\ref{prop:horizon} replaces the unobservable quantity $\alpha(A)$ with two quantities $k$ and $g_h$ that can be estimated from the data. This solves our problem of estimating the growth rate of capabilities.

One final adjustment is needed before we can use the horizon trend to estimate capability progress. The observed time horizon growth reflects both software progress and the growth of training compute, but when studying an SIE we want to know the capability growth that came purely from software progress. We can attribute some share $s$ of time horizon growth to software progress, so that the software-attributable horizon growth $s\,g_h$ is what enters the capability leg.

\subsection{Identification}\label{subsec:identification}

The parameters $\eta$, $\mu$, $\beta$, $\lambda$ are defined relative to the technology level $A$, which has no natural units. Only the combinations $\lambda\eta/\beta$ and $\lambda\mu/\beta$ are identified from data, and Corollary~\ref{cor:sie} states the SIE conditions in exactly those combinations, $r_E>1$ and $r_C>1$. In the data we can observe three growth rates that map into the model:

\begin{enumerate}[nosep]
\item the efficiency growth rate $g_\nu\equiv\dot\nu/\nu=\eta g_A$
\item the rate $g_\alpha=\mu g_A$ at which the human-only task share shrinks
\item the growth rate of research capacity $g_R\equiv\dot R/R$.
\end{enumerate}

Then the SIE condition---expressed in terms of these growth rates and the observable parameters of the model---is:

\begin{proposition}[Returns Ratios]\label{prop:returns}
Suppose that over a historical window research capacity and technology grow at constant rates $g_R>0$ and $g_A$. Then $\beta g_A=\lambda g_R$, the returns of Proposition~\ref{prop:asymptotic} are ratios of observable growth rates, and the SIE conditions of Corollary~\ref{cor:sie} read
\begin{equation}\label{eq:returns}
\boxed{\;
r_E\;=\;\frac{g_\nu}{g_R}\;>\;1
\quad\text{and}\quad
r_C\;=\;\frac{g_\alpha}{(1-\sigma)\,g_R}\;>\;1
\;}
\end{equation}
\end{proposition}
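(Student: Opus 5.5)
The plan is to prove the identity $\beta g_A=\lambda g_R$ directly from the law of motion \eqref{eq:innovation}, and then substitute it into the definitions of $r_E$ and $r_C$ from Proposition~\ref{prop:asymptotic}.

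\textbf{Step 1: the identity.} Write $g_A=\dot A/A=\delta R^{\lambda}A^{-\beta}$. Taking log time-derivatives gives
\[
\frac{\dot g_A}{g_A}=\lambda g_R-\beta g_A .
\]
By hypothesis $g_A$ is constant over the window, so $\dot g_A=0$. Hence $\beta g_A=\lambda g_R$. Because $g_R>0$, this also gives $g_A>0$. Equivalently, constancy of $\delta R^\lambda A^{-\beta}$ requires $R^\lambda$ and $A^\beta$ to grow at the same exponential rate.

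\textbf{Step 2: translating the margins into growth rates.} This step uses the parameterizations \eqref{eq:depth} and \eqref{eq:breadth}.
\begin{itemize}
\item From $\nu=cA^\eta$ we get $g_\nu=\eta g_A$.
\item From $1-\alpha=A^{-\mu}$ we get $g_\alpha=-\frac{\dd}{\dd t}\ln(1-\alpha)=\mu g_A$. The same relation holds if $\alpha$ is measured through Proposition~\ref{prop:horizon}.
\end{itemize}

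\textbf{Step 3: substitution.} Since $g_A>0$, we can divide by it:
\[
r_E=\frac{\lambda\eta}{\beta}=\frac{\lambda\,\eta g_A}{\beta g_A}=\frac{\lambda g_\nu}{\lambda g_R}=\frac{g_\nu}{g_R},
\]
\[
r_C=\frac{\lambda\mu}{(1-\sigma)\beta}=\frac{\lambda g_\alpha}{(1-\sigma)\lambda g_R}=\frac{g_\alpha}{(1-\sigma)g_R}.
\]
In each case Step 1 replaces $\beta g_A$ with $\lambda g_R$, and $\lambda>0$ cancels. Corollary~\ref{cor:sie} then turns $r^*>1$ into the two boxed inequalities of \eqref{eq:returns}.

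\textbf{Main obstacle.} The only delicate point is interpretive rather than computational. In the full model, $R^*$ is generally not a power of $A$, so exactly constant $g_R$ and $g_A$ are an idealization of the historical window. For the same reason, $g_R$ is taken to be the measured research-input growth, not $R^*$ computed from the model.

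The argument above uses only \eqref{eq:innovation} and constant growth. It does not require $R$ to be $R^*$, and it does not require $L$ and $Z$ to be fixed during the window, since historically they were not. I would make this explicit, so the reader sees that the ratios are identified from the window data even though the SIE conditions concern the fixed-input counterfactual. This works because $\eta,\mu,\beta,\lambda,\sigma$ are structural constants shared by both settings.
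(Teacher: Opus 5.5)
Your proposal is correct and follows the paper's proof essentially step for step: log-differentiate $g_A=\delta R^{\lambda}A^{-\beta}$ to get $\beta g_A=\lambda g_R$ under constant growth, then divide $g_\nu=\eta g_A$ and $g_\alpha=\mu g_A$ by this identity. Your closing caveat matches the paper's footnote to that proof: the ratios are identified from a historical window in which inputs grew, with $R$ taken as measured research input rather than the fixed-input $R^*$.
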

\noindent\emph{Proof.} See Appendix~\ref{app:proof-returns}.

Both legs are \emph{returns} parameters. The efficiency returns $r_E$ count the doublings of efficiency per doubling of cumulative research input along a balanced path, which is the returns-to-software parameter $r$ of \citet{eth2025software} and \citet{davidsonhoulden2025quick} applied to inference efficiency. The capability returns $r_C$ count the doublings of the human-bottleneck bound $\hbound{\alpha}=L(1-\alpha)^{-1/(1-\sigma)}$ (Lemma~\ref{lem:bounds}) per doubling of cumulative research input.

Proposition~\ref{prop:horizon} further gives $g_\alpha=k\,g_h$; with compute held fixed the relevant horizon growth rate is $s\,g_h$, so the capability leg reads
\begin{equation}\label{eq:rC-horizon}
r_C=\frac{k\,s\,g_h}{(1-\sigma)\,g_R} .
\end{equation}

The calibration uses six inputs, listed with their baselines and ranges in Table~\ref{tab:inputs}.

\begin{table}[t]
\centering
\small
\begin{tabular}{@{}llccl@{}}
\toprule
Symbol & Input & Baseline & Range & Distribution \\
\midrule
$g_\nu$ & Fixed-task efficiency growth & 0.92 & 0.42--2.02 & log-normal \\
$g_R$ & Research labor growth & 0.86 & 0.64--1.15 & log-uniform \\
$g_h$ & Time-horizon growth (50\%) & 1.93 & 1.57--2.36 & log-normal \\
$s$ & Software share of horizon growth & 0.46 & 0.31--0.71 & uniform \\
$k$ & Task-length tail index & 1.31 & 0.52--1.87 & log-uniform \\
$\sigma$ & Task substitution elasticity & 0.61 & 0.00--0.83 & uniform \\
\bottomrule
\end{tabular}

\caption{Calibration inputs. Rates are log points per year. Sources in Table~\ref{tab:sources}.}
\label{tab:inputs}
\end{table}

\subsection{Efficiency Leg}\label{subsec:efficiency-leg}

\paragraph{Efficiency progress $g_\nu$.} Unlike the returns-to-software estimates of \citet{eth2025software} and \citet{davidsonhoulden2025quick}, which use pretraining efficiency, the numerator of $r_E=g_\nu/g_R$ is primarily inference efficiency. I pool four estimates of algorithmic efficiency progress at a fixed task and performance level (Appendix~\ref{app:efficiency-detail} gives each source). The primary estimate measures the right object: the decline in the inference compute needed to reach a fixed score on GPQA Diamond, a benchmark of graduate-level science questions, net of hardware improvements, which \citet{gundlach2026price} infer from the prices of open-weight models. It is $\gnuinference$ per year. Because it is identified from a small set of open models over a short period, I triangulate it with three training-side estimates and pool the four; Appendix~\ref{app:efficiency-detail} describes the sources, the weights and the pooling. The baseline is $g_\nu=\gnubase$.

\paragraph{Research inputs $g_R$.} Staff at frontier laboratories has grown rapidly since the release of ChatGPT in late 2022. \citet{howhitfill2025experiments} report a growth rate of 0.85 log points per year in OpenAI's staff count, approximately $2.3\times$ per year. \citet{davidson2026networks} report growth in AI-laboratory headcount of roughly $140\%$ per year since ChatGPT (i.e.\ $\ln 2.4\approx0.88$ log points per year), from Epoch AI data. I set a baseline of $g_R=\gRbase$ log points per year. Note also that since $g_R$ is the denominator of both returns ratios, it appears in both SIE conditions.

\subsection{Capability Leg}\label{subsec:capability-leg}

\paragraph{Tail index $k$.} I estimate $k$ from the human completion times of the 96 tasks on which METR measures time horizons, in its Time Horizon 1.1 release, whose lengths come from timed attempts by skilled human professionals \citep{metr2026th11}. Because the horizon trend is measured on the same suite, both objects in $g_\alpha=k\,g_h$ refer to one distribution. The baseline is $k=\kbase$, with a range from $\kthin$ to $\krct$. Appendix~\ref{app:capability-detail} describes the sample, the estimator, the alternative datasets and the fits, and Figure~\ref{fig:survival} plots the fitted tail.

\paragraph{Task time-horizon growth $g_h$.} The 50\% time horizon is the task length at which a model's fitted success probability is one half \citep{kwa2025horizons}.\footnote{The cutoff success probability changes the level of the horizon but not its growth rate: \citet{kwa2025horizons} report identical doubling times for 50\% and 80\% time horizons. Since only the growth rate enters the capability ratio, it doesn't matter what cutoff we use.} On METR's updated Time Horizon 1.1 suite, the frontier time horizon has doubled every 196 days over 2019--2025 (95\% confidence interval 162 to 223 days), and every 131 days since 2023 (107 to 161 days) \citep{metr2026th11}. I use the faster post-2023 doubling times as the basis for $g_h=\ghbase$, because this better captures the window in which frontier AI systems have been applied to research tasks.

\paragraph{Software share $s$.} \citet{whitfill2025horizon} attribute a share $s=\sswbase$ of time-horizon growth to algorithmic progress, with the remainder coming from compute growth (Appendix~\ref{app:capability-detail}).

\paragraph{Substitution elasticity $\sigma$.} The model's $\sigma$ is the substitutability between research tasks, and by Proposition~\ref{prop:allocation} also between effective compute and research labor below the threshold $\sat$. Calibrated models of AI-driven growth set the task elasticity between $0.5$ and $0.67$ \citep{davidson2023compute,erdil2025gate,korinek2024scenarios}, drawing on aggregate estimates of substitution between capital and labor, and estimates of the elasticity of substitution between research compute and research labor range from Leontief to substitutes \citep{whitfillwu2025compute}. I set a baseline of $\sigma=\sigmabase$, the value in GATE, Epoch AI's integrated assessment model of AI automation \citep{erdil2025gate}, with a range from $0$, the Leontief limit, to $\sigmahigh$, the upper bound of the GATE range (Appendix~\ref{app:capability-detail}).

\subsection{Monte Carlo Simulation}\label{subsec:joint}

To quantify uncertainty in this calibration, I use Monte Carlo simulations that account for the uncertainty in each input. I draw the six inputs from the distributions in Table~\ref{tab:inputs}. The efficiency rate $g_\nu$ is drawn from the pooled log-normal \eqref{eq:pool} of Appendix~\ref{app:efficiency-detail}; its median is the baseline. The horizon rate $g_h$ is log-normal with median at the baseline and dispersion set by its confidence interval; $g_R$ and $k$ are log-uniform on their ranges, with the range for $g_R$ symmetric in logs around the baseline; $s$ and $\sigma$ are uniform on theirs. The inputs are drawn independently of one another, since they come from unrelated data sources, but the two ratios are not independent: each draw's $g_R$ is the denominator of both, so a draw with slow input growth raises $r_E$ and $r_C$ together. I take 200{,}000 draws and report, for each returns ratio, its median, its 95\% interval (the 2.5th to the 97.5th percentile) and the share of draws in which it exceeds one, together with the share of draws in each combination of legs holding. 

Using these simulations, I can measure which parameters are most important in determining whether an SIE occurs, thus guiding the focus of research and policy attention. I use a variance-based sensitivity analysis \citep{sobol2001global,saltelli2010variance}, measuring what share of the variance in explosion outcomes can be attributed to each input. Let $Y=\mathbf{1}\{r_E>1,\ r_C>1\}$ indicate that a draw of the inputs is an SIE. Its variance across draws is $p(1-p)$ where $p$ is the share of draws that are an SIE. Then the first-order Sobol index of input $x_i$ is
\[
S_i=1 - \frac{\mathbb{E}\big(\mathrm{Var}[Y\mid x_i]\big)}{\mathrm{Var}(Y)}
\]
the expected share of SIE variance removed by learning $x_i$ exactly while the other inputs remain uncertain. (The same construction applies to a group of inputs learned jointly.) I estimate the indices with a pick-freeze design \citep{saltelli2010variance}. 

\subsection{Results}\label{subsec:results}

Table~\ref{tab:headline} collects the estimates of the returns ratios and their uncertainties from the Monte Carlo simulations. Recall that an SIE requires both returns to exceed one (Corollary~\ref{cor:sie}).

\begin{table}[tbp]
\centering
\small
\begin{tabular}{@{}lcccc@{}}
\toprule
 & & \multicolumn{3}{c}{Monte Carlo} \\
\cmidrule(lr){3-5}
 & Baseline & Median & 95\% interval & Share above one \\
\midrule
Efficiency returns $r_E$ & 1.07 & 1.07 & 0.45--2.52 & 56\% \\
Capability returns $r_C$ & 3.45 & 2.01 & 0.60--8.22 & 85\% \\
Binding returns $r^{*}=\min\{r_E,r_C\}$ & 1.07 & 0.99 & 0.44--2.21 & 49\% \\
\midrule
\multicolumn{5}{@{}l}{\emph{Share of Monte Carlo draws in which}} \\
\quad both legs hold (an SIE) & & & & 49\% \\
\quad only the capability leg holds & & & & 37\% \\
\quad only the efficiency leg holds & & & & 7\% \\
\quad neither leg holds & & & & 7\% \\
\bottomrule
\end{tabular}

\caption{Estimated returns ratios and Monte Carlo simulations (Table~\ref{tab:inputs}).}
\label{tab:headline}
\end{table}

\paragraph{Efficiency leg.} At baseline, $r_E=\gnubase/\gRbase=\rEbase$, with a 95\% Monte Carlo interval of $\rElo$ to $\rEhi$. Each doubling of cumulative research input makes AI about $\rEdoubling\times$ more efficient, barely more than the $2\times$ needed to sustain the next doubling. Table~\ref{tab:published-r} places the ratio among published estimates of returns to software research, whose central estimates all lie between $0.8$ and $4$.

\begin{table}[t]
\centering
\footnotesize
\begin{threeparttable}
\begin{tabular}{@{}p{0.36\textwidth}ccp{0.27\textwidth}@{}}
\toprule
Source & Central & Range & Basis \\
\midrule
This paper & \rEbase & \rElo--\rEhi & Section~\ref{subsec:efficiency-leg} \\
\citet{eth2025software} & --- & 1--4 & domain doubling times over input growth \\
\citet{davidsonhoulden2025quick} & 1.2 & 0.4--3.6 & computer vision \\
\citet{erdil2024ideas} & 0.83 & 0.53--1.12\tnote{a} & computer chess \\
\citet{howhitfill2025experiments} & 1.89 & 1.07--3.21 & language models \\
\citet{howhitfill2025experiments} & 1.26 & 0.73--2.09 & computer vision \\
\bottomrule
\end{tabular}
\begin{tablenotes}\footnotesize
\item[a] \citet{erdil2024ideas} report only a bootstrap standard error of 0.15, from which this confidence interval is constructed via normal approximation.
\end{tablenotes}
\end{threeparttable}
\caption{$r_E$ estimates compared to published estimates on the returns to software.}
\label{tab:published-r}
\end{table}

\paragraph{Capability leg.} At baseline, $r_C=k\,s\,g_h/((1-\sigma)\,g_R)=\rCbase$, with a 95\% Monte Carlo interval of $\rClo$ to $\rChi$: the human-only task share shrinks at $\rCbase$ times the rate $(1-\sigma)g_R$ that the capability leg requires. In terms of doublings, each doubling of cumulative research input shrinks the human-only task share $1-\alpha$ by a factor of about $\rCalphadoubling$. Because tasks are complements, that raises the human-bottleneck bound on research capacity $\hbound{\alpha}$ about $\rCdoubling$-fold, far more than the $2\times$ needed to sustain the next doubling. Note that the $\rCdoubling$-fold increase is in the \textit{upper bound} on research capacity, the limit as effective compute grows without bound; it is not the magnitude of the increase in research capacity $R^*$ itself.

\paragraph{Joint uncertainty.} At baseline the efficiency leg sits just above the threshold and the capability leg is well clear of it. In the Monte Carlo simulations plotted in Figure~\ref{fig:plane}, both legs hold in \pbothpct{} of draws. This share is not the probability of an SIE; it is the share of draws from input distributions chosen to span the published evidence. Even taking the model literally, data this noisy, drawn from historical rather than current AI development, cannot support a precise probability. The capability leg alone holds in \ponlyCpct{} of draws, so that efficiency prevents an SIE; the efficiency leg alone holds in \ponlyEpct{} of draws, so that capability prevents an SIE; in \pneitherpct{} of draws, neither holds.

\begin{figure}[tbp]
\centering
\includegraphics[width=0.8\textwidth]{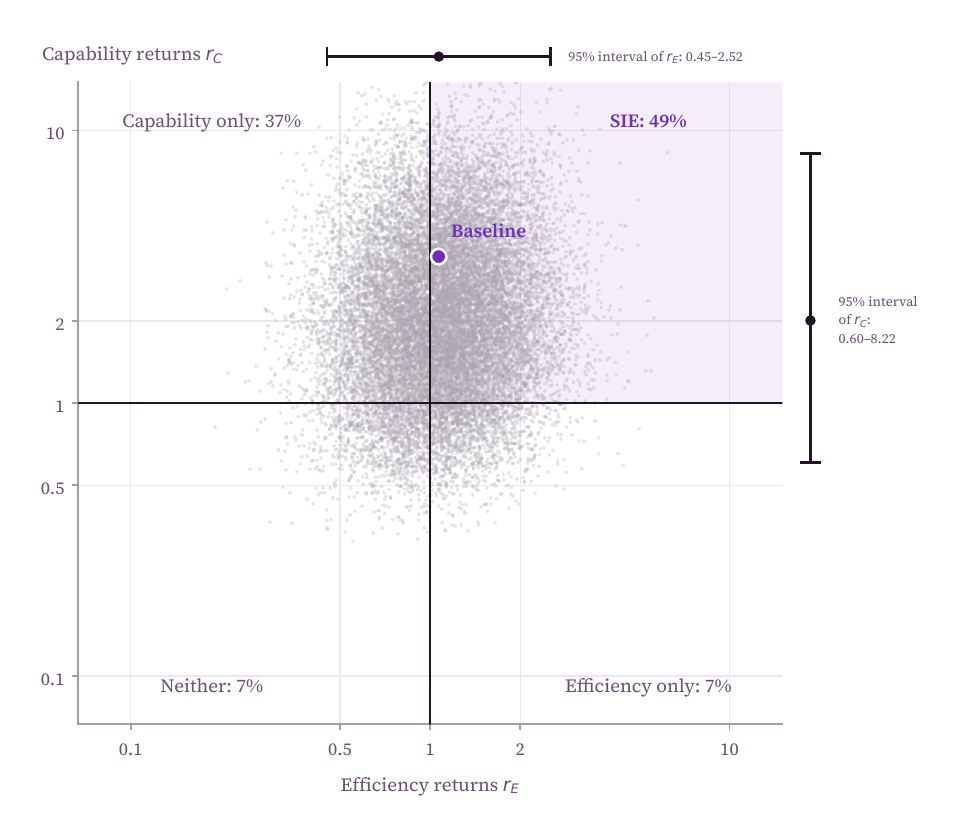}
\caption{Monte Carlo draws of the two returns ratios (Table~\ref{tab:inputs}).}
\label{fig:plane}
\end{figure}

Notably, the 95\% intervals of both returns ratios contain one: $\rElo$ to $\rEhi$ for the efficiency returns and $\rClo$ to $\rChi$ for the capability returns. This variability highlights the inherent uncertainty of my setting; even capability progress, whose point estimate seems numerically all-but-guaranteed to be sufficient for an SIE, cannot be ruled sufficient with the standard of rigor usually applied to empirical research. 

A feature of these results is that the median binding return is $r^{*}=\rstarmedian$, with a 95\% interval of $\rstarlo$ to $\rstarhi$. Thus, while both of the returns ratios are estimated to be above one, the median binding ratio is less than 1. Thus, even though each condition holds on average, they do not hold \emph{together} on average. Of course, given the amount of uncertainty in this data, there is no meaningful difference between $\rstarmedian$ and $\rEbase$. This is simply a demonstration of the importance of separating the efficiency and capability dimensions of an SIE, since focusing on only one of them would overstate the likelihood of an SIE. 

A more robust result is which leg binds. It depends only on which of the two returns is lower, not on whether either exceeds one. The efficiency returns are the lower of the two in \pEbindspct{} of draws, and at baseline the capability returns are $\rCrEratio$ times the efficiency returns. Efficiency, rather than capability, is the likely bottleneck on an SIE.

\paragraph{Sources of uncertainty.} Table~\ref{tab:sensitivity} reports the first-order Sobol index of each input---the share of variance in SIE outcome that would be eliminated by learning about that input---to measure how critical they are to resolving uncertainty about an SIE. The single largest source of uncertainty is the rate of efficiency progress $g_\nu$; learning only $g_\nu$ would eliminate \sieshareGnu{} of uncertainty about whether an SIE occurs, more than the contribution of all other inputs combined (learning all five other inputs jointly would resolve \sieshareOthers{} of uncertainty). 

Better measures of efficiency progress therefore have the highest value of information for assessing whether an SIE is plausible, and the rate of efficiency progress is also the input on which existing estimates disagree most. Thus, the single most important measurement we can do to understand the likelihood of an SIE is to measure efficiency progress in frontier AI systems. 

\begin{table}[H]
\centering
\small
\begin{tabular}{@{}lc@{}}
\toprule
Input & Share of SIE uncertainty \\
\midrule
Efficiency progress $g_\nu$ & 0.45 \\
Research input growth $g_R$ & 0.11 \\
Tail index $k$ & 0.02 \\
Substitution elasticity $\sigma$ & 0.02 \\
Software share $s$ & 0.01 \\
Time-horizon growth $g_h$ & 0.00 \\
\midrule
All inputs except $g_\nu$, jointly & 0.24 \\
\bottomrule
\end{tabular}

\caption{First-order Sobol indices of the SIE indicator across Monte Carlo draws.}
\label{tab:sensitivity}
\end{table}

\section{Conclusion}\label{sec:conclusion}

I separate two margins of AI progress: capability, the share of AI research tasks that AI can perform, and efficiency, the AI research output that a unit of compute supplies on those tasks. A software intelligence explosion requires both margins to improve fast enough relative to idea difficulty. I show that each leg of that condition is a returns ratio measurable from observable growth rates, over a historical window in which compute and AI inputs also grew, and use that to test whether the conditions for an SIE hold.

A striking feature of these results is that the likelihood of being in an SIE, \pbothpct{} of draws, is less than the likelihood of either condition holding on its own (\pEpct{} for efficiency and \pCpct{} for capability). This highlights the two-legged nature of the condition: an SIE requires both margins to improve fast enough at once, so assessing either leg in isolation overstates how likely an SIE is. My analysis should therefore be taken as evidence for the plausibility of an SIE, while highlighting its fragile nature and reliance on multiple margins of progress. 

Nonetheless, the most important takeaway from this paper is not a precise number for the likelihood of an SIE, but a focus on the most important uncertainty behind it: the rate of efficiency progress. Efficiency is the lower of the two returns ratios in \pEbindspct{} of draws, and learning only the rate of efficiency progress would resolve \sieshareGnu{} of the uncertainty about whether an SIE occurs, more than all other inputs combined. Thus, whether an SIE happens depends mostly on how fast AI is getting cheaper to run at a fixed level of performance. Better data on that rate, and on how it responds to research effort, is the most valuable step toward understanding the plausibility of an SIE.

These results also identify where policy could act to manage the possibility of an SIE. For a policymaker who wants to pace the frontier and slow AI progress, efficiency is the bottleneck target. The efficiency leg is necessary for an SIE, no matter how fast capability improves, and it is the leg that is closest to binding at baseline. A cap on efficiency progress that holds the efficiency returns at or below one, so that each doubling of cumulative research input yields at most one doubling of efficiency, rules out an SIE in the model, however fast capability improves. How much efficiency growth such a cap removes depends on how fast efficiency is in fact growing. At the baseline estimate, it requires lowering the efficiency returns by about \rEcapcut. The same logic applies to a policymaker who simply wants to control the pace of AI progress rather than prevent an SIE. The asymptotic acceleration of AI progress is governed by the binding returns $r^{*}=\min\{r_E,r_C\}$ (Proposition~\ref{prop:asymptotic}), so any reduction in the efficiency returns lowers $r^{*}$ directly, whereas a reduction in the capability returns changes it only once they fall below the efficiency returns. Thus, regulation to slow frontier AI progress could prioritize efficiency slowdowns over capability slowdowns.

This analysis is considerably more speculative than is standard in economics research, so I want to qualify these conclusions with the inherent limitations of my approach. First, this model takes a simplified view of compute. Compute enters the AI research process only as inference compute that serves AI research labor, with no distinction between the types of compute an intelligence explosion would draw on. Critically, experimental compute, one of the biggest potential bottlenecks to an SIE, appears nowhere in the model. This paper is best understood as a theory of how AI labor can contribute to an SIE, not a complete accounting of the role that compute would play in an SIE.

Second, the data used in this empirical exercise is both individually weak and also somewhat cobbled together. For example, the data on efficiency progress are noisy, with estimates of the rate of progress varying by an order of magnitude across sources. Furthermore, the relevant growth rates are computed over different periods---for example, the growth rate of research labor is measured largely since late 2022, while the inference efficiency growth rate is measured over April 2024 to November 2025, making a mismatched ratio. These are all empirical limitations inherent to the fast-moving setting of AI research, and they could influence the conclusions of my analysis. 

Third, capability in this model is the share of tasks that humans can do which AI can also do. AI that becomes better than the best human researchers at a task, for instance at choosing which research directions to pursue, counts only as an efficiency gain on that task. The model therefore leaves out an explosion driven by superhuman research taste, which features prominently in other forecasts \citep{aifutures2025takeoff}.

Finally, this paper is not about the full automation of research. The model keeps a positive human-only task share at every technology level, so it is unsuited to analyzing what would happen if AI research were fully automated. Some experts expect AI research to be close to fully automated within the next few years \citep{aifutures2025takeoff}, and a different model would be needed to analyze the effects of AI progress in such a setting.

\bibliographystyle{plainnat}
\bibliography{references}

\appendix

\section{Proofs}\label{app:proofs}

Throughout, $R^*$ denotes research capacity \eqref{eq:Rfunction} and $\sat=X/(L+X)$. The CES exponent $(\sigma-1)/\sigma$ is negative because $0<\sigma<1$. I call the two branches of \eqref{eq:Rfunction} the corner branch ($\alpha\le\sat$, where all labor works on human-only tasks) and the flat branch ($\alpha\ge\sat$). I drop time subscripts.

\subsection{Task allocation}\label{app:proof-allocation}

\begin{proof}
Since the exponent $(\sigma-1)/\sigma$ is negative, maximizing $R$ is equivalent to minimizing $\Phi \equiv \int_0^1 r(i)^{(\sigma-1)/\sigma} \dd i$, and all tasks are staffed at an optimum. Because compute contributes nothing on tasks $i > \alpha$, any optimum sets $z(i) = 0$ there, so all compute is deployed on the automated block.

\emph{Step 1 (Within-Block Equalization).} Let $\ell_A \equiv \int_0^{\alpha} \ell$ be labor assigned to automated tasks, so the automated block has total input $X + \ell_A$ and the human-only block has $L - \ell_A$. The map $u \mapsto u^{(\sigma-1)/\sigma}$ is strictly convex on $(0,\infty)$: its second derivative is $\frac{\sigma-1}{\sigma}\cdot\frac{-1}{\sigma}\, u^{(\sigma-1)/\sigma-2} > 0$. By Jensen's inequality, for a block of measure $b$ with total input $Y$,
\[
\int_{\text{block}} r^{(\sigma-1)/\sigma} \;\ge\; b \left( \frac{Y}{b} \right)^{(\sigma-1)/\sigma} = b^{1/\sigma}\, Y^{(\sigma-1)/\sigma},
\]
with equality iff $r$ is constant on the block; the constant profile is feasible because inputs are perfect substitutes within tasks. Hence
\[
\Phi(\ell_A) \;=\; \alpha^{1/\sigma} (X + \ell_A)^{(\sigma-1)/\sigma} + (1-\alpha)^{1/\sigma} (L - \ell_A)^{(\sigma-1)/\sigma}.
\]

\emph{Step 2 (Labor Across Blocks).} Write $r_A = (X+\ell_A)/\alpha$ and $r_N = (L-\ell_A)/(1-\alpha)$ for per-task inputs. Differentiating,
\[
\Phi'(\ell_A) \;=\; \frac{\sigma-1}{\sigma} \left[ \alpha^{1/\sigma}(X+\ell_A)^{-1/\sigma} - (1-\alpha)^{1/\sigma}(L-\ell_A)^{-1/\sigma} \right]
\;=\; \frac{\sigma-1}{\sigma} \left[ r_A^{-1/\sigma} - r_N^{-1/\sigma} \right].
\]
Since $(\sigma-1)/\sigma < 0$ and $u \mapsto u^{-1/\sigma}$ is strictly decreasing, $\Phi' < 0$ exactly when $r_A < r_N$: labor flows toward the block with lower per-task input. The interior solution equalizes $r_A = r_N$, i.e.\ $\ell_A = \alpha L - (1-\alpha) X$, which is nonnegative iff $\alpha \ge \sat$; then $r(i) \equiv L + X$ and $R^* = L+X$, the flat branch. If $\alpha < \sat$ then $r_A > r_N$ already at $\ell_A = 0$ (as $X/\alpha > L/(1-\alpha)$ iff $\alpha < \sat$), so $\Phi'(0) > 0$; since $\Phi(\ell_A)$ is strictly convex (a sum of strictly convex functions of affine maps), the first-order condition characterizes the unique minimum, $\Phi$ is increasing on $[0,L)$, and the corner $\ell_A = 0$ is optimal, giving $R^* = \Phi(0)^{\sigma/(\sigma-1)}$, the corner branch of \eqref{eq:Rfunction}.
\end{proof}

\subsection{Bottleneck bounds}\label{app:proof-bounds}

Define the \emph{human-bottleneck bound} as research capacity when effective compute is unlimited, $\hbound{\alpha}\equiv\lim_{X\to\infty}R^*(\alpha;L,X)$.

\begin{lemma}[Bottleneck Bounds]\label{lem:bounds}
For every $\alpha\in(0,1)$, the human-bottleneck bound exists and equals
\begin{equation}\label{eq:compute-abundant-limit}
\hbound{\alpha}=\frac{L}{(1-\alpha)^{1/(1-\sigma)}},
\end{equation}
and research capacity satisfies
\begin{equation}\label{eq:upper-bounds}
R^*(\alpha;L,X)\;\le\;\min\left\{L+X,\;\hbound{\alpha}\right\}.
\end{equation}
\end{lemma}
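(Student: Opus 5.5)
The plan is to work directly from the closed form in Proposition~\ref{prop:allocation}. For the limit, I fix $\alpha\in(0,1)$ and note that $\sat=X/(L+X)\to1$ as $X\to\infty$. So for all sufficiently large $X$ we have $\alpha<\sat$, and $R^*$ is given by the corner branch. Writing $\rho\equiv(\sigma-1)/\sigma<0$, the corner branch reads $R^*=[\alpha^{1/\sigma}X^{\rho}+(1-\alpha)^{1/\sigma}L^{\rho}]^{1/\rho}$. Since $\rho<0$, the term $X^{\rho}\to0$, and the bracket converges to $(1-\alpha)^{1/\sigma}L^{\rho}>0$. Continuity of $u\mapsto u^{1/\rho}$ on $(0,\infty)$ then gives $\hbound{\alpha}=(1-\alpha)^{1/(\sigma\rho)}L=(1-\alpha)^{-1/(1-\sigma)}L$, because $\sigma\rho=\sigma-1$. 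This proves \eqref{eq:compute-abundant-limit}.

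For the bound $R^*\le L+X$, I argue directly from the optimization rather than the formula. For any feasible allocation, the per-task output $r(i)$ is at most $\ell(i)+\nu z(i)$, and the power-mean (Jensen) inequality gives $R\le\int_0^1 r\,\dd i$. This works because the CES aggregate with exponent $\rho<1$ on a probability measure is a power mean of order $\rho$, which is dominated by the arithmetic mean. The integral is at most $L+\nu Z=L+X$. Taking the maximum over allocations yields $R^*\le L+X$. Alternatively, on the flat branch equality holds, and on the corner branch one can cite the same power-mean comparison with weights $\alpha,1-\alpha$ applied to $X/\alpha$ and $L/(1-\alpha)$. The direct argument avoids a case split.

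For $R^*\le\hbound{\alpha}$, I use monotonicity: $R^*$ is nondecreasing in $X$, since more compute enlarges the feasible set of the maximization. Hence $R^*(\alpha;L,X)\le\lim_{X'\to\infty}R^*(\alpha;L,X')=\hbound{\alpha}$. As a check, on the corner branch the bound is immediate: dropping the positive term $\alpha^{1/\sigma}X^{\rho}$ shrinks the bracket, and raising to the negative power $1/\rho$ increases the result to $\hbound{\alpha}$. The only step needing care, which I expect to be the main (mild) obstacle, is handling the flat branch and the sign reversals from the negative exponent. The monotonicity-in-$X$ argument sidesteps this, so I would lead with it and use the explicit corner computation only as confirmation.
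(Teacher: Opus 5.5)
Your proof is correct. The compute-abundant limit and the bound $R^*\le L+X$ match the paper's argument: the corner branch applies for large $X$, and the power-mean inequality is applied to an arbitrary feasible allocation.

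You differ on the bound $R^*\le\hbound{\alpha}$. You derive it from the limit, using monotonicity of $R^*$ in $X$. The paper proves it directly: it applies Jensen's inequality to the human-only block $(\alpha,1]$ of an arbitrary feasible allocation. That block receives at most $L$ units of input, which gives $\Phi\ge(1-\alpha)^{1/\sigma}L^{(\sigma-1)/\sigma}$, and the bound follows by raising to the negative power.

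The paper's route has two advantages. It does not use the closed form of Proposition~\ref{prop:allocation}, or $X$, at all. It also shows that no allocation, optimal or not, can exceed the human bottleneck.

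Your route is shorter once the limit is established, and it merges the two halves of the lemma into one fact. Since $R^*$ is nondecreasing in $X$, $\hbound{\alpha}=\sup_X R^*(\alpha;L,X)$, so the bound is sharp by construction. The cost is that it relies on the optimal-value function and on the limit computed from the closed form.

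One point of precision: the compute constraint is an equality, so more compute does not literally enlarge the feasible set. Your monotonicity claim still holds, for any of three reasons:
\begin{itemize}
\item surplus compute can be placed on the human-only block, which has positive measure since $\alpha<1$ and where compute contributes nothing;
\item surplus compute can be spread over automated tasks, which weakly raises every $r(i)$;
\item both branches of the closed form are visibly increasing in $X$.
\end{itemize}
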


\begin{proof}
\emph{Compute-Abundant Limit.} Fix $\alpha\in(0,1)$. Because $\sat=X/(L+X)$ converges to one, the corner branch of \eqref{eq:Rfunction} applies for all sufficiently large $X$. Since $(\sigma-1)/\sigma<0$, $X^{(\sigma-1)/\sigma}$ converges to zero, and therefore
\[
\lim_{X\to\infty}R^*(\alpha;L,X)
=
\left[(1-\alpha)^{1/\sigma}L^{(\sigma-1)/\sigma}\right]^{\sigma/(\sigma-1)}
=
\frac{L}{(1-\alpha)^{1/(1-\sigma)}}.
\]

\emph{Upper Bounds.} $R^* \le \int_0^1 r = L + X$ is the power-mean inequality: the mean of order $(\sigma-1)/\sigma<1$ is below the arithmetic mean. For the second: for any feasible allocation, the human-only block receives only labor, total at most $L$, so by Jensen (Step 1 of Appendix~\ref{app:proof-allocation})
\[
\Phi \;\ge\; \int_{\alpha}^{1} r^{(\sigma-1)/\sigma} \;\ge\; (1-\alpha)^{1/\sigma} \left( \int_{\alpha}^1 r \right)^{(\sigma-1)/\sigma} \;\ge\; (1-\alpha)^{1/\sigma} L^{(\sigma-1)/\sigma},
\]
the last step because $u \mapsto u^{(\sigma-1)/\sigma}$ is decreasing and $\int_{\alpha}^1 r \le L$. Raising to the negative power $\sigma/(\sigma-1)$ flips the inequality: $R^* \le L (1-\alpha)^{1/(\sigma - 1)} = \hbound{\alpha}$.
\end{proof}

\subsection{Asymptotic growth}\label{app:proof-asymptotic}

\begin{proof}
Along any path, research output is research capacity at the current capability and effective compute, $R_t=R^*(\alpha(A_t);L,X(A_t))$ with $X(A)=cZA^{\eta}$ by \eqref{eq:depth} and $1-\alpha(A)=A^{-\mu}$ by \eqref{eq:breadth}, so $g_A=\delta (R^*)^{\lambda}A^{-\beta}$ and
\[
\frac{\dd\ln g_A}{\dd\ln A}=\lambda\,\frac{\dd\ln R^*}{\dd\ln A}-\beta .
\]
Let $m\equiv\min\{\eta,\mu/(1-\sigma)\}$. Since $\beta(r^*-1)=\lambda m-\beta$, it suffices to show that the elasticity of $R^*$ with respect to $A$ converges to $m$. I also show that $R^*/A^{m}$ converges to a positive constant, which the proof of Corollary~\ref{cor:sie} uses.

If $\mu=0$, then $\alpha\equiv0$ and $R^*=L$: the elasticity is zero and $m=0$.

Suppose $\mu>0$. On the corner branch, let
\[
\Phi_A=\alpha^{1/\sigma}X^{(\sigma-1)/\sigma},
\qquad
\Phi_N=(1-\alpha)^{1/\sigma}L^{(\sigma-1)/\sigma},
\qquad
\omega_A=\frac{\Phi_A}{\Phi_A+\Phi_N},
\qquad
\omega_N=1-\omega_A ,
\]
so that $R^*=(\Phi_A+\Phi_N)^{\sigma/(\sigma-1)}$. Using $\dd\ln X/\dd\ln A=\eta$ and $\dd\ln(1-\alpha)/\dd\ln A=-\mu$,
\begin{equation}\label{eq:proof-elasticity}
\frac{\dd\ln R^*}{\dd\ln A}
=
\frac{\sigma}{\sigma-1}
\left[
\omega_A\left(\frac{\sigma-1}{\sigma}\eta+\frac{1}{\sigma}\frac{\dd\ln\alpha}{\dd\ln A}\right)
-\omega_N\frac{\mu}{\sigma}
\right]
=
\omega_A\,\eta+\omega_N\frac{\mu}{1-\sigma}-\frac{\omega_A}{1-\sigma}\frac{\dd\ln\alpha}{\dd\ln A}.
\end{equation}
The last term vanishes as $A\to\infty$ because $\alpha\to1$. The weights are governed by the ratio
\[
\frac{\Phi_A}{\Phi_N}=\alpha^{1/\sigma}\left(\frac{cZ}{L}\right)^{(\sigma-1)/\sigma}A^{\left(\mu-(1-\sigma)\eta\right)/\sigma},
\]
which diverges if $\eta<\mu/(1-\sigma)$, converges to zero if $\eta>\mu/(1-\sigma)$, and converges to a positive constant if the two are equal. In the first case $\omega_A\to1$ and the elasticity converges to $\eta$; in the second $\omega_N\to1$ and it converges to $\mu/(1-\sigma)$; in the third both terms converge to $\eta$. In every case the limit is $m$. For the level: if $\omega_A\to1$, then $R^*/A^{\eta}=\alpha^{1/(\sigma-1)}cZ\,(1+\Phi_N/\Phi_A)^{\sigma/(\sigma-1)}\to cZ$, so $R^*/A^{m}\to cZ$; if $\omega_N\to1$, then $R^*\to \Phi_N^{\sigma/(\sigma-1)}=LA^{\mu/(1-\sigma)}$, so $R^*/A^{m}\to L$; in the tie, $R^*/A^{\eta}$ converges to the positive constant $\bigl[(cZ)^{(\sigma-1)/\sigma}+L^{(\sigma-1)/\sigma}\bigr]^{\sigma/(\sigma-1)}$.

On the flat branch, $R^*=L+cZA^{\eta}$, whose elasticity $\eta\,cZA^{\eta}/(L+cZA^{\eta})$ converges to $\eta$ when $\eta>0$ and equals zero when $\eta=0$, and $R^*/A^{\eta}$ converges to $cZ$ if $\eta>0$ and equals $L+cZ$ if $\eta=0$. The flat branch applies when $\alpha\ge\sat$, that is, when $1+cZA^{\eta}/L\le A^{\mu}$; for large $A$ this requires $\eta\le\mu$, which implies $\eta\le\mu/(1-\sigma)$, so on the flat branch $m=\eta$ and the limits agree with those above. The sign of $A^{\mu}-1-cZA^{\eta}/L$ is eventually constant (positive if $\eta<\mu$, negative if $\eta>\mu$, and, if $\eta=\mu$, positive if and only if $cZ<L$), so exactly one branch applies for all large $A$. On that branch the elasticity of $R^*$ converges to $m$ and $R^*/A^{m}$ converges to a positive constant, which proves the first form of \eqref{eq:asymptotic}; dividing by $\beta$ gives the second.
\end{proof}

\subsection{SIE conditions}\label{app:proof-sie}

\begin{proof}
By Proposition~\ref{prop:asymptotic}, $\liminf_{A\to\infty}\dd\ln g_A/\dd\ln A=\beta(r^*-1)$, which is positive if and only if $r^*>1$, that is, if and only if both inequalities in \eqref{eq:jointgeometric} hold. If $r^*=1$ then $\lambda m=\beta$, and since $R^*/A^{m}$ converges to a positive constant $K$ (proof of Proposition~\ref{prop:asymptotic}), $g_A=\delta (R^*/A^{m})^{\lambda}A^{\lambda m-\beta}\to\delta K^{\lambda}>0$.

For finite-time divergence, note first that $A_t\to\infty$: since $R^*>0$, $\dot A>0$, and if $A_t$ were bounded it would converge to some $A_\infty<\infty$ with $\dot A_t\to\delta R^*(A_\infty)^{\lambda}A_\infty^{1-\beta}>0$, a contradiction. So $A_t$ eventually enters any region $A\ge\bar A$, and the same convergence gives constants $0<c_0<C_0<\infty$ with $c_0A^{m}\le R^*\le C_0A^{m}$ for all sufficiently large $A$. If $\lambda m>\beta$, then $\dot A\ge \delta c_0^{\lambda} A^{1+(\lambda m-\beta)}$, so comparison with a superlinear power law gives divergence in finite time. If $\lambda m\le\beta$, then $\dot A\le\delta C_0^{\lambda} A^{1+(\lambda m-\beta)}\le\delta C_0^{\lambda} A$ for $A\ge1$, so $A_t$ grows at most exponentially and cannot diverge in finite time. Finite-time divergence therefore occurs exactly when $\lambda m>\beta$, that is, when both inequalities in \eqref{eq:jointgeometric} hold.
\end{proof}

\subsection{Capability from time horizons}\label{app:proof-horizon}

\begin{proof}
By (b) the tasks automated at level $A$ are exactly those with $q\le h(A)$, and by (c) their measure in the continuum is $\alpha(A)=F(h(A))$. For $h(A)\ge h_0$, (d) gives $1-\alpha(A)=1-F(h(A))=\bar F_0(h(A)/h_0)^{-k}$, which is \eqref{eq:alpha-horizon}. Taking logarithms, $\ln(1-\alpha)=-k\ln h+\ln\bar F_0+k\ln h_0$; differentiating with respect to $\ln A$ gives $\mu\equiv-\dd\ln(1-\alpha)/\dd\ln A=k\,\dd\ln h/\dd\ln A$, and differentiating with respect to time gives \eqref{eq:galpha}.
\end{proof}

\subsection{Returns ratios}\label{app:proof-returns}

\begin{proof}
From \eqref{eq:innovation}, $g_A=\delta R^{\lambda} A^{-\beta}$, so $\ln g_A=\ln\delta+\lambda\ln R-\beta\ln A$ and $\dd\ln g_A/\dd t=\lambda g_R-\beta g_A$. A constant $g_A$ requires $\beta g_A=\lambda g_R$.\footnote{With $L$ and $Z$ fixed, a balanced path with constant $g_A>0$ exists only in the knife-edge case $r^*=1$; the identification instead uses a historical window in which research labor grew. Over that window AI supplied a small share of research input, so $R\approx L$ and $g_R$ is the growth rate of research labor (Section~\ref{subsec:efficiency-leg}).} Dividing $g_\nu=\eta g_A$ and $g_\alpha=\mu g_A$ by $\beta g_A=\lambda g_R$ gives $\eta/\beta=g_\nu/(\lambda g_R)$ and $\mu/\beta=g_\alpha/(\lambda g_R)$, so
\[
r_E=\frac{\lambda\eta}{\beta}=\frac{g_\nu}{g_R},
\qquad
r_C=\frac{\lambda\mu}{(1-\sigma)\beta}=\frac{g_\alpha}{(1-\sigma)g_R},
\]
and Corollary~\ref{cor:sie} gives \eqref{eq:returns}.
\end{proof}

\section{Calibration Details}\label{app:calibration}

\subsection{Efficiency-leg detail}\label{app:efficiency-detail}

\paragraph{Sources.} The four estimates of efficiency progress are:
\begin{enumerate}[label=\arabic*.]
\item \emph{Inference prices.} The hardware-adjusted algorithmic component of the decline in open-weight inference prices at fixed GPQA Diamond performance is $3.22\times$ per year (90\% confidence interval $1.54$ to $6.75\times$), $\gnuinference$ per year \citep{gundlach2026price}.
\item \emph{Pretraining compute.} \citet{ho2024algorithmic} estimate, across 231 language models from 2012 to 2023, that the pretraining compute required to reach a fixed performance halved every 8.4 months (95\% confidence interval 4.5 to 14.3 months), a rate of $\gnupretraining$ per year.
\item \emph{Scale-independent component.} \citet{gundlach2025origin} retrain historical architectures with and without individual innovations and extrapolate the comparisons to frontier compute. Of the $6{,}930\times$ efficiency gain they account for over 2012--2023, $2{,}700\times$ comes from scale-dependent changes, leaving about $2.6\times$ that is independent of scale, $\gnuscalefree$ per year. Notably, this is much lower than other measures of efficiency progress.
\item \emph{Computer vision.} The compute needed to achieve a fixed performance on ImageNet classification halves every 9 months (95\% confidence interval 4 to 25 months), giving a progress rate of $\gnuimagenet$ per year \citep{erdil2022cv}.
\end{enumerate}

No single one of these measures the model's $g_\nu$ exactly. While inference efficiency is the ideal measure, \citet{gundlach2026price} is not an ideal estimate of it, focusing only on two benchmarks and identifying the inference decline only from a sliver of open models over a short time period. In contrast, the other three measures cover roughly a decade of progress from 2012 onward, measure compute directly in FLOPs rather than inferring it from prices, and hold performance fixed by design, making them suitable for triangulation. Thus, I assign $1/2$ weight to inference price declines, and $1/6$ weight to the three other measures, pooling them linearly as described below. The baseline value that comes from averaging the rates this way is $g_\nu=\gnubase$.

\paragraph{Pooling the efficiency estimates.} Source $j$ reports a rate $\hat g_j$ with log standard deviation $s_j$ and receives weight $w_j$. The baseline is the weighted average of the rates, and $g_\nu$ is log-normal around it with a dispersion equal to the weighted average of the sources' dispersions:
\begin{equation}\label{eq:pool}
\hat g_\nu=\sum_j w_j\,\hat g_j,
\qquad
\ln g_\nu\;\sim\;\mathcal{N}\Big(\ln\hat g_\nu,\ \big(\textstyle\sum_j w_j s_j\big)^2\Big).
\end{equation}
Each $s_j$ comes from its source's confidence interval; the scale-independent component, for which none is reported, carries the dispersion of the pretraining interval. I pool linearly rather than log-linearly because a log-linear pool weights each source by its precision as well as by $w_j$, and the precision of the scale-independent component is borrowed from the pretraining interval rather than reported; a linear pool keeps the weights at those of Section~\ref{subsec:efficiency-leg}.

\paragraph{Computer-vision and inference-price estimates.} The fixed-performance design applied to ImageNet classification gives a halving time of 9 months with a 95\% confidence interval of 4 to 25 months \citep{erdil2022cv}. On the inference side, \citet{gundlach2026price} decompose the decline in the price of reaching a fixed score on GPQA Diamond (graduate-level science questions), Mock AIME (competition mathematics) and SWE-bench Verified (real software-engineering issues from GitHub repositories) into a hardware component, about $1.3\times$ per year in price-performance, and an algorithmic component, about $3\times$ per year for open-weight models ($3.22\times$ on GPQA Diamond, the rate used here), with strong dependence on the score level. Raw declines in the price of fixed capability are far larger, $9\times$ to $900\times$ per year across benchmarks and thresholds \citep{cottier2025inferenceprices}, because they also reflect hardware, pricing, and the conversion of training compute into inference savings through distillation and overtraining, which \citet{villalobos2023trading} put at roughly one order of magnitude of inference compute per one to two orders of magnitude of additional training compute. The hardware-adjusted algorithmic component is close to the pretraining rate, so the training-side and inference-side measurements of fixed-task efficiency agree, and both enter the pool that sets the baseline.

\paragraph{Denominator alternatives.} \citet{howhitfill2025experiments} report that AI companies are almost doubling their staff headcounts each year, using OpenAI's staff counts since November 2022, and \citet{davidson2026networks} report growth in AI-laboratory headcount of roughly $140\%$ per year since ChatGPT, from Epoch AI data. The software-researcher growth rate of $110\%$ per year in \citet{davidson2026networks} is backed out from their calibration rather than measured. Research and engineering staff across the sector grow at about $23\%$ per year. A denominator that treats experiment compute as a research input, with a compute share of research spending of about two thirds and compute spending growing at $3.7\times$ per year, is $0.67\times1.3+0.33\times0.85\approx1.15$ \citep{howhitfill2025experiments}. Experiment compute is not the inference compute $Z$ of the model, so the labor rate is the baseline and the augmented rate is the upper end of the range in Table~\ref{tab:inputs}; the lower end, $\gRlow$ ($\gRlowfactor\times$ per year), mirrors it in logs around the baseline.

\subsection{Capability-leg detail}\label{app:capability-detail}

\paragraph{Task-length sample.} The baseline source is the task suite on which METR measures time horizons, in its Time Horizon 1.1 release \citep{metr2026th11}: 157 software, machine-learning, cybersecurity and reasoning tasks from HCAST \citep{rein2025hcast}, five research-engineering environments from RE-Bench \citep{wijk2024rebench}, and 66 sub-minute software actions. A task's length is the geometric mean of the completion times of the human baseliners who succeeded at it. Tasks with no successful baseline carry only a researcher estimate, and I drop them: the estimation sample is the 96 HCAST and RE-Bench tasks whose lengths are measured rather than estimated. This suite is the baseline for two reasons. First, among the datasets with per-task times its measured lengths reach furthest into the tail, where the index is fitted, and they are measured under one protocol. Second, the horizon trend is measured on it, so both objects in $g_\alpha=k\,g_h$ refer to one distribution. Appendix~\ref{app:data} describes the baselining protocol and the two other datasets.

Two further datasets serve to triangulate:
\begin{enumerate}[label=\arabic*)]
\item METR's early-2025 study of experienced open-source developers \citep{becker2025productivity} records forecast and actual completion times for 246 issues from the developers' own repositories: real work rather than a designed suite, but developers were asked to contribute issues expected to take at most two hours, so the distribution is truncated from above.
\item GDPval \citep{openai2025gdpval} covers 1,320 tasks across 44 occupations, with completion times self-reported by experts and validated by reviewers. Its tasks are far from AI research, but it spans a very diverse set of professions, and so it represents an outside view of how long tasks are distributed; only its published quantiles are available.
\end{enumerate}

\begin{table}[t]
\centering
\footnotesize
\setlength{\tabcolsep}{4pt}\begin{tabular}{@{}lcccccccc@{}}
\toprule
 & & \multicolumn{5}{c}{task length, hours} & \multicolumn{2}{c}{share of hours in tasks} \\
\cmidrule(lr){3-7}\cmidrule(lr){8-9}
Data & $n$ & p25 & p50 & p75 & p90 & max & $>4$ h & $>8$ h \\
\midrule
TH1.1 baselined tasks & 96 & 0.1 & 0.4 & 1.3 & 5.5 & 18.5 & 0.74 & 0.37 \\
Early-2025 RCT, actual & 246 & 0.7 & 1.2 & 2.2 & 3.8 & 10.0 & 0.32 & 0.10 \\
Early-2025 RCT, forecast & 246 & 1.0 & 1.5 & 2.5 & 4.0 & 6.0 & 0.09 & 0.00 \\
GDPval, full set (published) & 1320 & 2.0 & 4.0 & 8.0 & --- & 605.0 & --- & --- \\
\bottomrule
\end{tabular}

\caption{Task-length data (human hours).}
\label{tab:task-summary}
\end{table}

\paragraph{Estimator.} Let $x_1,\dots,x_n$ be the human completion times of the tasks in a dataset. The empirical task-length distribution, weighted by task length as in step (c), satisfies
\begin{equation}\label{eq:survival}
1-\hat F(T)=\frac{\sum_{i:\,x_i>T}x_i}{\sum_i x_i},
\end{equation}
the share of work-hours in tasks longer than $T$. If $F$ has the Pareto tail of step (d) above $h_0$, the length-weighted (pseudo-)maximum-likelihood estimate of $k$ from the tasks above $h_0$, which is consistent for $k$, is
\begin{equation}\label{eq:hill-time}
\hat k=\left[\frac{\sum_{i:\,x_i>h_0}x_i\ln(x_i/h_0)}{\sum_{i:\,x_i>h_0}x_i}\right]^{-1}.
\end{equation}
As a check on the functional form, I also report a log-log regression of $1-\hat F$ on $T$ between $h_0$ and the longest task. The threshold $h_0$ is four hours: above the body of the suite, where three quarters of the baselined tasks are shorter than $1.3$ hours, and below the eight-hour budget that caps the RE-Bench tasks.

\paragraph{Fits.} Table~\ref{tab:tail-fits} reports the fits. The estimate above four hours is $k=\kbase$, and the log-log slope gives $\kbaseloglog$. The developer study gives $\krct$ above four hours; its distribution is narrower because issues were screened to be completable in a session. GDPval's quantiles give $\kgdpval$ between the 75th percentile and the maximum; its published quantiles support only a count-weighted fit, so the value is not directly comparable to the length-weighted baseline. Thus, the baseline is $k=\kbase$, with a range from the same estimator at a one-hour threshold, $\kthin$, to the developer-study value, $\krct$.

\begin{figure}[tbp]
\centering
\includegraphics[width=\textwidth]{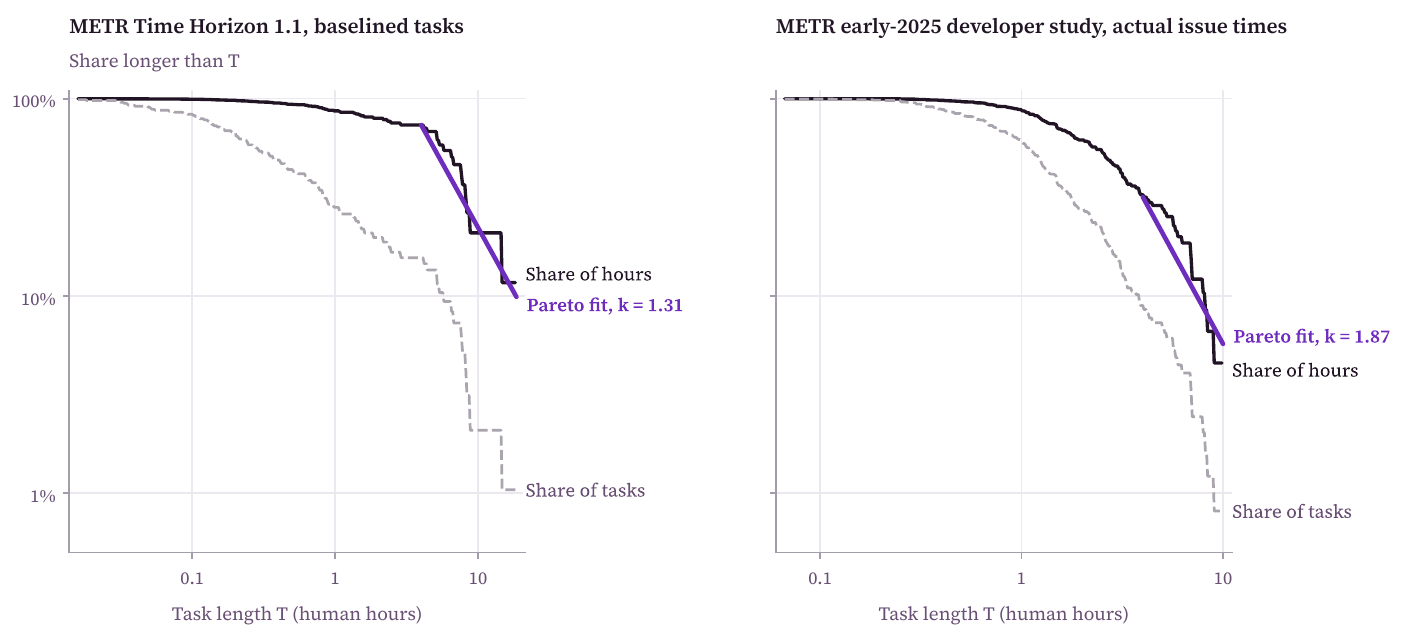}
\caption{Survival of task length by share of hours (solid) and share of tasks (dashed), with the Pareto tail fitted above four hours (purple).}
\label{fig:survival}
\end{figure}

\paragraph{Software share.} In \citet{whitfill2025horizon}, algorithmic progress of $g_{\mathrm{alg}}=1.27$ per year (95\% confidence interval $0.67$ to $3.61$), against frontier training-compute growth of $4.5\times$ per year \citep{sevilla2024compute}, gives $s=1.27/(1.27+\ln 4.5)=\sswbase$, and their confidence interval maps to a range of $\sswlow$ to $\sswhigh$. Two independent decompositions serve for triangulation: the stitched capability index of \citet{ho2025rosetta} attributes $0.30$ units per year of capability growth to software and $0.27$ to compute, a share of $0.52$, and the pretraining efficiency rate of \citet{ho2024algorithmic} in place of $g_{\mathrm{alg}}$ gives a share of $0.40$.

\paragraph{Substitution elasticity.} The model's $\sigma$ is the substitutability of AI research labor, supplied by inference compute, for human researchers across research tasks. No estimate of $\sigma$ exists. Calibrated models of AI-driven growth set the task elasticity at $0.67$ \citep{davidson2023compute}, $0.61$ with a range of $0.17$ to $0.83$ \citep{erdil2025gate}, and $0.5$ \citep{korinek2024scenarios}, drawing on aggregate estimates of substitution between capital and labor. \citet{whitfillwu2025compute} estimate the elasticity of substitution between research compute and research labor: $2.58$ (standard error $0.34$) treating all research compute as the input, and $-0.10$ (standard error $0.18$), which is Leontief, treating frontier-scale experiments as the input. Their compute is the compute spent on training runs and experiments, which is not the inference compute $Z$ of the model, so these estimates bear on $\sigma$ indirectly, through the complementarity between machine and human inputs in research production. By Proposition~\ref{prop:allocation}, $\sigma$ is also the elasticity of substitution between effective compute and research labor below the threshold $\sat$. The baseline is the GATE value $\sigma=\sigmabase$, with a range from $0$, the Leontief limit (in which the results above hold with $\sigma=0$) at the low end of the \citet{whitfillwu2025compute} evidence, to $\sigmahigh$, the upper bound of the GATE range. At $\sigma\ge1$, outside the model's maintained range, tasks are substitutes rather than complements and the capability leg holds trivially.

\subsection{Task-length data}\label{app:data}

\paragraph{METR task suite.} HCAST \citep{rein2025hcast} comprises 189 tasks in 78 families spanning machine learning, cybersecurity, software engineering and general reasoning; the time-horizon suite of \citet{kwa2025horizons} used 97 of them, and Time Horizon 1.1 \citep{metr2026th11} raised the total suite from 170 to 228 tasks by adding 73 HCAST tasks, removing 15 and updating 53. Human baseliners are recruited mainly through the professional networks of METR staff, must hold a degree from a top-100 university or have more than three years of relevant professional experience, and must pass a domain-specific qualification task; their average relevant experience is about five years. They are paid \$50 to \$100 per hour plus \$25 to \$150 per hour in performance bonuses, with a success bonus proportional to time spent and a speed bonus for completing a task faster than every other baseliner, a structure METR describes as pushing task-time estimates toward a lower bound on human time. Baseliners work in the same containerized environment as the agents over SSH with identical instructions, are recorded, may use the web, and may not use AI assistants; RE-Bench baseliners were allowed AI tools. Most tasks are scored pass or fail; continuously scored tasks are binarized at a success threshold shown to the baseliners. A task's length is the geometric mean of the completion times of its successful baseline runs, since baseline times are roughly log-normal; a task counts as baselined if at least one run succeeded, and METR notes that conditioning on success biases lengths downward. For tasks with no successful baseline the length is a forecast by the task's creator or an author familiar with it. In HCAST, 61\% of baseline attempts succeeded and 110 of 189 tasks received at least one successful baseline. Tasks are weighted by the inverse square root of their family size in the horizon regression. METR rates each task on sixteen ``messiness'' factors; the mean score is 3.2 of 16, no task scores above 8, and METR states that the tasks differ systematically from real work in having automatic scoring, clearly defined scope and little outside context.

Three features of this protocol matter for the tail fit. First, RE-Bench environments are run under an eight-hour budget, and the long HCAST additions are mostly estimated rather than baselined: of the 31 tasks in Time Horizon 1.1 estimated at eight hours or more, five have measured lengths. The estimation sample therefore contains no measured length above 18.5 hours, and the tail is censored there. Second, the geometric mean over successful runs and the speed incentives both shorten measured lengths relative to the time a typical human would need, which shifts the distribution left without a clear effect on its slope. Third, I exclude the sub-minute software actions because they were baselined by METR employees under a different protocol and lie four orders of magnitude below the threshold. The estimation sample is the 96 tasks in the Time Horizon 1.1 release whose length comes from human baselines and whose source is HCAST or RE-Bench.

\paragraph{Developer study.} \citet{becker2025productivity} recruited 16 experienced developers, typically with more than a decade of experience and about five years and 1,500 commits on their repository, from mature open-source projects averaging 23,000 stars, 1.1 million lines of code and ten years of age. Each developer supplied a list of real issues from their own repository; about half were not formally defined on the repository. Before each issue was randomized, the developer forecast how long it would take with and without AI assistance. Issues were then assigned by coin flip to an AI-allowed or AI-disallowed condition, and developers self-reported the time spent before and after pull-request review; screen recordings were collected for compliance and for activity labeling of a subset. Developers were paid \$150 per hour and were asked to contribute issues taking at most two hours or to split longer ones, so the sample is drawn from the short end of the work distribution, and the longest recorded implementation took ten hours. I use the initial implementation time across both conditions. METR's late-2025 follow-on with 57 developers is larger but, by METR's own account, subject to selective submission of tasks and unreliable time reporting, so I do not use it.

\paragraph{GDPval.} \citet{openai2025gdpval} constructed 1,320 tasks across 44 occupations in the nine sectors that contribute most to United States GDP, with at least 30 tasks per occupation, and a gold subset of 220. Task writers were professionals with at least four years of experience, fourteen on average, screened by interview, background check and a qualification quiz. Completion times are the writers' self-reports of the real-world time the task takes, validated and corrected by multiple independent occupational reviewers. The paper publishes the mean, standard deviation, minimum, quartiles and maximum of these times for the full set and the gold subset; the public release contains prompts, reference files and rubrics for the gold subset but no times. I fit the full-set quartiles and maximum, count-weighted by necessity.

\begin{table}[h]
\centering
\small
\begin{tabular}{@{}lcccccc@{}}
\toprule
 & & & \multicolumn{1}{c}{count-weighted} & \multicolumn{2}{c}{length-weighted} \\
\cmidrule(lr){4-4}\cmidrule(lr){5-6}
Data & Threshold & Tasks above & Hill & Hill & log-log \\
\midrule
TH1.1 baselined tasks ($n = 96$) & 1 h & 27 & 0.76 & 0.52 & 0.67 \\
 & 2 h & 19 & 0.96 & 0.73 & 0.97 \\
 & 4 h & 15 & 1.74 & 1.31 & 1.29 \\
Early-2025 RCT issues ($n = 246$) & 1 h & 148 & 1.33 & 0.86 & 1.15 \\
 & 2 h & 68 & 1.72 & 1.25 & 1.51 \\
 & 4 h & 21 & 2.13 & 1.87 & 2.29 \\
GDPval quantiles ($n = 1320$) & 8 h & 330 & 1.34 & --- & --- \\
\bottomrule
\end{tabular}

\caption{Tail index estimates for task length: count-weighted and length-weighted Hill estimates, and the log-log slope of $1-\hat F$ between the threshold and the longest task.}
\label{tab:tail-fits}
\end{table}

\subsection{Monte Carlo}\label{app:mc}

Table~\ref{tab:sources} gives the source of every input in Table~\ref{tab:inputs}. The Monte Carlo draws $g_\nu$ from the linear pool \eqref{eq:pool} of the rates $\gnuinference$, $\gnuscalefree$, $\gnupretraining$ and $\gnuimagenet$, with weights $1/2$ on the first and $1/6$ on each of the others: a log-normal distribution with median $\gnubase$, the baseline, and log standard deviation $\gnusd$, the weighted average of the sources' log standard deviations, each implied by its source's confidence interval, with the scale-independent component, for which none is reported, taking that of \citet{ho2024algorithmic}; $g_h$ from a log-normal distribution with median at the baseline and dispersion set by its confidence interval; $g_R$ and $k$ from log-uniform distributions on their ranges; and $s$ and $\sigma$ from uniform distributions on their ranges. Inputs are drawn independently of one another, and each draw's $g_R$ enters both ratios.

\begin{table}[h]
\centering
\footnotesize
\begin{tabular}{@{}lp{0.86\textwidth}@{}}
\toprule
Input & Source \\
\midrule
$g_\nu$ & Linear pool of four fixed-task rates: the baseline is their weighted average and the log standard deviation is the weighted average of their log standard deviations, each from its source interval. Sources: the hardware-adjusted algorithmic component of inference-price declines of \citet{gundlach2026price}, $3.22\times$/yr on GPQA Diamond for open-weight models (90\% CI $1.54$--$6.75$) (weight 1/2); the scale-independent component of \citet{gundlach2025origin}, $2.6\times$ over 2012--2023, no interval reported, pretraining dispersion (1/6); \citet{ho2024algorithmic}, compute to fixed perplexity halves every 8.4 months (95\% CI 4.5--14.3) (1/6); the ImageNet rate of \citet{erdil2022cv}, 9 months (95\% CI 4--25) (1/6) \\
$g_R$ & Frontier-lab staff growth since late 2022: OpenAI staff $0.85$/yr \citep{howhitfill2025experiments}, Epoch AI lab headcount $\approx140\%$/yr \citep{davidson2026networks}; baseline is their mean in logs; high end is the experiment-compute-augmented denominator of \citet{howhitfill2025experiments}, low end mirrors it in logs around the baseline \\
$g_h$ & \citet{metr2026th11}: doubling every 131 days since 2023 (95\% CI 107--161); the 2019--2025 doubling time is 196 days (162--223) \\
$s$ & \citet{whitfill2025horizon}: horizon growth proportional to effective-compute growth $g_C + g_{\mathrm{alg}}$, with algorithmic progress $g_{\mathrm{alg}} = 1.27$/yr (95\% CI 0.67--3.61) and training compute growth $4.5\times$/yr \citep{sevilla2024compute}; the capability-index decomposition of \citet{ho2025rosetta} gives $0.52$ \\
$k$ & Length-weighted Pareto fit above four hours to the human-baselined tasks of METR's Time Horizon 1.1 suite \citep{metr2026th11,rein2025hcast}; low end: the same fit above one hour; high end: the same fit to the real open-source issues of \citet{becker2025productivity}; GDPval quantiles \citep{openai2025gdpval} give a count-weighted $1.34$ \\
$\sigma$ & \citet{whitfillwu2025compute}: $-0.10$ (SE $0.18$) with frontier-experiment compute, $2.58$ (SE $0.34$) with total research compute; \citet{davidson2023compute}: $0.67$; \citet{erdil2025gate}: $0.61$ ($0.17$--$0.83$); \citet{korinek2024scenarios}: $0.5$ \\
\bottomrule
\end{tabular}

\caption{Sources for the calibration inputs.}
\label{tab:sources}
\end{table}

\end{document}